\titleformat{\section}[block]{\large\normalfont\filcenter}{\thesection.}{.5em}{}
\titleformat{\subsection}[hang]{\itshape\bfseries}{\thesubsection.}{.5em}{}
\newtheorem{theorem}{Theorem}
\newtheorem{remark}{Remark}
\newtheorem{corollary}{Corollary}
\newtheorem{definition}{Definition}
\newtheorem{lemma}{Lemma}
\newtheorem{assumption}{Assumption}
\newenvironment{proof}[1][Proof]{\textbf{#1.} }{\  \rule{0.5em}{0.5em}}
\DeclareMathOperator*{\argmax}{arg\,max}
\DeclareMathOperator*{\supp}{supp}
\definecolor{shade}{gray}{.7}
\begin{document}
\title{Persuaded Search\thanks{For helpful comments and discussions, we thank Nageeb Ali, Nima Haghpanah, Emir Kamenica, Erik Madsen, Fedor Sandomirskiy, Andy Skrzypacz, Juuso Toikka, Mark Whitmeyer, Kun Zhang, and seminar audiences at Brown,  Duke, NYU Stern, Princeton, EC24 Workshop on
Information Acquisition, Conference on Mechanism and Institution Design, Cowles Conference on Economic Theory, SITE, and Stony Brook International Conference on Game Theory. This paper supersedes a previous draft entitled ``Certification in Search Markets.''}}
\author{Teddy Mekonnen\thanks{Department of Economics, Brown University. Contact: \href{mailto:mekonnen@brown.edu}{mekonnen@brown.edu}}\hspace*{1.25em} Zeky Murra-Anton\thanks{Market Development, ISO New England. Contact: \href{mailto:zmurraanton@iso-ne.com }{zmurraanton@iso-ne.com}. This paper has not been funded or endorsed by ISO New England nor does it represent ISO New England's views.} \hspace*{1.25em} Bobak Pakzad-Hurson\thanks{Department of Economics, Brown University. Contact: \href{mailto:bph@brown.edu}{bph@brown.edu}}}
\date{\today}

\maketitle
\thispagestyle{empty}
\setcounter{page}{0}
\begin{abstract}
We consider sequential search by an agent who cannot observe the quality of goods but can acquire information from a profit-maximizing principal with limited commitment power. The principal can charge higher prices for more informative signals, but high future prices discourage continued search, thereby reducing the principal's profits. A unique stationary equilibrium outcome exists: the principal $(i)$ induces the socially efficient stopping rule, $(ii)$ extracts the full surplus, and $(iii)$ persuades the agent against settling for marginal goods, extending the duration of surplus extraction. However, introducing an additional, free source of information can lead to inefficiency in equilibrium.\end{abstract}

\noindent \textit{JEL Classifications: D83, D86, L15}\\
\noindent\textit{Keywords: search, information design, information acquisition}
\newpage

\section{Introduction}
 \begingroup
\allowdisplaybreaks

Agents in search and matching markets often contend with frictions stemming from incomplete information. An agent who does not perfectly observe the quality of the goods he samples during his search may mistakenly select a good of inferior quality or may abandon his search altogether. Given these inefficiencies, it is unsurprising that searching agents turn to information brokers so as to reduce the inherent uncertainty in various markets: consumers seek information on products before making purchases, home buyers commission geological risk assessments for natural disasters before closing, and individuals turn to advice from matchmakers to identify compatible spouses.\footnote{Each author of this paper has solicited CarFax reports during searches for used automobiles. Two authors have commissioned geological reports while searching for houses. None of the authors admits to having contacted a matchmaker, but at least one probably should.}

We refer to the interaction between an ``agent" (he) who searches for a ``good'' and a ``principal" (she) who brokers information as a \emph{persuaded search} game. Because the objectives of the agent and principal differ---the agent seeks to maximize the outcome of his search while the principal seeks to maximize the payments she receives---natural questions arise:  How much information about goods' quality should the principal optimally convey? How much should she charge?  

The answers to these questions potentially depend on salient features of the search market. To fix ideas, consider the market for pre-employment skills testing. A firm (``agent'') seeks to fill a job vacancy, but does not observe the productivity of each candidate employee (``goods''). However, the firm can contract with an information broker (``principal'') who adjudicates candidate productivity through skill assessments and job simulations. Such pre-employment testing services are utilized by 82\% of US private-sector firms,\footnote{https://www.sparcstart.com/wp-content/uploads/2018/03/2017-CandE-Report.pdf.}  giving rise to an industry with 2 billion USD in annual revenue.\footnote{https://www.polarismarketresearch.com/industry-analysis/candidate-skills-assessment-market. The annual revenue figure is given for the most recent year reported, 2021, and  is projected to nearly triple by 2030.} Evidence shows that pre-employment testing  dramatically increases the quality of worker hired, and provides significant information to the firm that is otherwise unavailable at the time of hiring \citep{aut08, hof18}. Typically, a firm contracts with the information broker on a short-term basis with payments on a per-test basis,\footnote{For example, ProfileXT charges per candidate employee assessed. This payment structure is possibly due to difficulties committing to future assessments in the presence of changing skill demands over time.} there is a single information broker who contracts with the firm,\footnote{\label{footnote_diamond}A single information broker could be due to the existence of a ``natural monopoly'' in certain industries. For example, nurses must pass an exam to be certified in Advanced Cardiovascular Life Support (a requirement for certain positions), but the American Heart Association is the only accredited body to certify such tests. De facto monopoly power  can be rationalized in a model with multiple information brokers if searching firms must additionally engage in costly search for new information brokers  \'a la \cite{dia71}.} and the information broker is the sole source of information for the searching firm.\footnote{In low-skill service jobs, for example, the candidate assessment plays a crucial role in the hiring process, as it typically occurs before the job interview (see https://www.sparcstart.com/wp-content/uploads/2018/03/2017-CandE-Report.pdf) and the test results often serve as the deciding factor in hiring decisions. \cite{aut08} suggest that pre-employment test results affect only hiring and not wages in the particular market they study. More broadly, test results are typically not revealed to candidate employees and are therefore unlikely to factor into wage setting.}
 
Of course, some search markets differ along various dimensions---for example, real estate agents and headhunters are often paid only upon the successful termination of search, agents in some markets may have the ability to contract with multiple competing principals, and agents in some markets may have external information about candidate quality. We proceed by first studying the equilibrium of a model reflecting the pre-employment testing industry, and then discuss the role of institutional features in guiding the equilibrium outcome. 

To that end, we present a parsimonious model that embeds an information design framework \`a la \cite{kam11} into a simple sequential search problem \`a la \cite{mcc70}.\footnote{\label{footnote_mccall}We note that McCall motivates his model by considering a worker searching for a firm, while our motivating example considers a firm searching for a worker. This difference is merely expositional; \cite[p. 113]{mcc70} notes, ``The analysis presented here is directed to the employee's job-searching strategy. Obviously, similar methods could be applied to investigate the employer's job market behavior.''} A long-lived agent  seeks to fill a single vacancy by randomly sampling from a continuum of goods that are differentiated in their quality. The agent samples one good per period and exits the market upon making a selection--he prefers to match with a higher quality good but is also impatient and prefers to match sooner rather than later. We depart from the classic sequential search literature by assuming that sampling a good reveals no information on the quality of that good to the agent. Instead, the agent can acquire information from a long-lived  principal. Similarly to \cite{ber18}, and unlike much of the information design literature, we assume that the principal has no intrinsic preferences over the outcome of the agent's search. Instead, she sells information to the agent and seeks to maximize her profits. In each period, she picks the informativeness of a signal and sells it by offering a spot contract to the agent, and we assume that the principal has limited commitment power in the sense that she is unable to commit to contracts beyond the current period.    
 
 The principal can only make money as long as the search continues. However, the information provided and the price charged can potentially affect the duration of search. The dynamic nature of the agent’s search gives rise to two important considerations for the principal's profit maximization. First, for a fixed sequence of future contracts, she must contend with an \emph{intra-temporal} trade-off between surplus extraction and persuasion: the principal can earn a higher profit today by selling a Blackwell more informative signal, but a more informative signal can lower the probability with which the agent is persuaded  to continue his search. Second, contracts offered in subsequent periods have \emph{inter-temporal} impacts on the trade-off between extraction and persuasion in the current period: higher prices or less informative signals in the future lower the agent's value from continued search, thereby reducing his willingness to continue searching today (and thus the principal's ability to persuade him). We study stationary equilibria in which the same spot contract resolves the intra-temporal trade-off in each period while also taking into account its inter-temporal impact.

Our main result shows that the principal, despite lacking long-term commitment power and being constrained to offer stationary contracts, does as well as a principal who could commit to any dynamic selling mechanism. In particular, we show that she extracts the socially efficient surplus in any stationary equilibrium. Efficiency implies that the agent searches as if he receives full information for free; he terminates his search if and only if a good's quality exceeds the McCall reservation value---the same reservation value used by the agent in \cite{mcc70} in which he observes a fully informative signal for free in each period. In contrast, surplus extraction implies that the agent is no better off than in autarky, i.e., without any information. Therefore, he would prefer to terminate his search for any good whose expected quality exceeds the \emph{autarky reservation value}---the reservation value the agent would use without any signal of a good's quality. The McCall reservation value exceeds the autarky reservation value, implying that the agent searches longer than he would otherwise prefer in equilibrium.

Surplus extraction is intuitive in our setting; the agent lacks any bargaining power because the principal designs the contract in each period. Furthermore, in a stationary equilibrium, the agent cannot use the threat of low continuation payoffs to prevent profitable deviations by the principal. 

What is more surprising is that a principal with limited commitment can nevertheless generate the socially efficient surplus and simultaneously extract it.\footnote{Our result is reminiscent to that of \cite{bergemannbrooksmorris} who, in a monopoly screening problem, show that it is possible to design information that generates socially efficient trade while simultaneously allowing consumers to extract all surplus in excess of the seller's no-information profit.} In order to achieve this outcome in a stationary equilibrium, the principal offers a spot contract with two prominent and interdependent features. First, the principal  persuades the agent against an early exit from the search market by pooling goods with qualities below the McCall reservation value into a ``fail" signal realization.\footnote{\label{footnote_cost}Such pooling signals are in line with the pre-employment testing studied in \cite{aut08}. They document that while the assessments generate a rich set of test scores, the information broker pools low performers into an implicit ``fail" category (the probability of being hired conditional on inclusion in this category is 0.08\%). This observation also suggests that it is not costly for the information broker to generate more informative signals;  otherwise, the practice of pooling realizations ex-post to generate a coarser signal would be costlier than simply producing the coarser signal to begin with. Our model reflects this observation by assuming that the principal's signal production is costless. In \hyperref[stationary]{\Cref{stationary}}, we discuss  implications of costly signal production.} Pooling is necessary for full surplus extraction; if the principal offers the agent a signal that reveals any additional information about a good whose quality falls between the autarky and McCall reservation values, the agent would consume the good, thereby terminating his search inefficiently early. The implementability of this pooling signal crucially hinges on the agent's incentives to continue his search when he sees ``fail" and to stop his search otherwise. It is intuitive to see that the agent is persuaded to stop his search when he does not observe a ``fail." We show that by pooling goods whose qualities fall between the autarky and McCall reservation values (sufficiently high quality goods for which the agent would be willing to stop his search) with goods whose qualities fall below the autarky reservation value (low quality goods for which he would rather continue searching), the agent is persuaded to continue searching when he observes a ``fail."

Second, the per-period price the principal charges for this pooling signal is equal to the agent's willingness-to-pay (WTP) for it, implying that the agent's participation constraint binds. In general, this per-period price is a fraction of the expected total profit the principal derives. Consequently, the principal extracts the entire surplus through dribs and drabs over a long period of time.

Our model yields stark findings of full surplus extraction, and the principal neither gains from considering more complex history-dependent contracts nor has any value for long-term commitment. In a number of extensions, we test the robustness of these findings to differences present in various search markets. Notably, our results on surplus extraction and efficiency are qualitatively unchanged in an alternative model in which the agent pays the principal only upon successfully completing his search, as is typical in real-estate. In a market with competing principals, the essentially unique stationary equilibrium outcome resembles that of classical Bertrand competition, and obtains efficiency of search but with the agent capturing the surplus \citep{MPH}.

Our findings are more nuanced in markets where some information is available even without contracting with the principal. For example, a job candidate with a  college degree is informationally differentiated from a job candidate without a college degree, or a job candidate with a history of employment in a particular field may have a different reputation than a newcomer. Formally, an exogenous signal of each good's quality is publicly observed immediately upon sampling and prior to contracting with the principal. The principal offers a spot contract that can depend on the realization of the public signal (e.g., providing a different skill-assessment to those with versus without college degrees). This model nests both our base model (if the public signal is completely uninformative) and the model of \cite{mcc70} (if the public signal is completely informative). 
  
We show that the presence of public information in a persuaded search game may result in an ex-post lower-quality match for the agent. From a policy perspective, our finding implies that blinding the players to the public signal (e.g., hiding candidates' resumes) can improve efficiency in the search market. Empirical evidence supports this claim; \cite{hof18} find that searching firms who rely on pre-employment testing along with resume screening hire less productive workers than those who rely only on pre-employment testing.  We provide tight conditions for when inefficiency can occur: either when interim beliefs cause the agent to reject contracting with the principal because he is too \emph{optimistic}, or when interim beliefs have a \emph{thin left tail} that require the principal to offer a distorted signal in order to persuade the agent to continue his search.

\subsection*{Related Literature}
Our paper marries the literature on sequential search with the more recent literature on information design. There is a small but growing body of work at the intersection of these two literatures. In the context of random search markets, \cite{do22} and \cite{hu22} consider a centralized information design problem in which a planner shapes the price competition between firms by controlling the information about the quality of goods that the consumers observe, while \cite{boa19} and \cite{whi21} consider a decentralized information design problem wherein each firm competes by choosing how much information it discloses to consumers. \cite{au2023attraction} similarly study firm competition in prices and information but in the context of directed search markets.\footnote{\cite{an06} and \cite{ch19} study information design for a search good, whose true value is revealed upon sampling. In these models, the information designer seeks to persuade an agent to engage in search. Our model crucially differs in that we study search for an experience good whose true value is revealed only upon consumption following the termination of search. Therefore, our principal seeks to persuade the agent to continue his search.}

Our work differs from these papers in two ways: First, in all the aforementioned  papers, information itself is not for sale; it is merely a tool for impacting the price and the selling probability of a separate good. In contrast, the main focus of our paper is precisely the design and pricing of information as a good. Second, 
 the relationships in all these papers are short term: if a consumer and a firm fail to trade, both move on to new trading partners. In contrast, a key feature of our framework is that the relationship between the agent and the principal is long term, which allows us to study the inter-temporal effects of repeated contracting. These dynamic effects also differentiate our paper from the related literature on the design and pricing of information goods \citep{adm86, adm90, es07,ber15, ber18}  and the literature on certification \citep{mat85,sha94,liz99, ali22}, which all focus on static problems. 
 
 Information design with long-term relationships has also been studied by \cite{hor16}, \cite{orl20}, and \cite{ely20}. A common feature of these papers is a persistent state that is unobserved by the agent, which leads to a non-stationary contractual environment because the agent becomes more informed over time. In all three, commitment power plays an important role in equilibrium outcomes. Our model does not feature cumulative learning by either the agent or the principal as the underlying environment is stationary, and we show that the principal cannot improve on her stationary equilibrium payoffs with commitment power.

The rest of the paper is organized as follows: We describe our model in \hyperref[model]{\Cref{model}} and introduce analytical tools in \hyperref[prelim]{\Cref{prelim}}. We then present our main result in \hyperref[stationary]{\Cref{stationary}} and extend our analysis to include public signals in \hyperref[public]{\Cref{public}} before concluding in \hyperref[conclude]{\Cref{conclude}}. Omitted proofs are located in the \hyperref[appendix]{Appendix}, and we include extensions of our model to allow for flow search costs, unequal discount factors, and a pay-at-the-end contracting scheme in the Online Appendix.

\section{Model}\label{model}
We consider an information design problem that is embedded in a sequential search problem without recall. The game involves an agent (he) who searches within the market for a good to consume, and a principal (she) who controls the information the agent observes while he searches.

The agent has unit demand and his ex-post payoff from consuming a good equals the good's quality. In each period $t=1, 2,\ldots$ that the agent remains in the market, he samples a good of quality $\theta_t$ which is identically and independently distributed according to an absolutely continuous distribution $F$ on an interval $[\underline\theta, \bar\theta]\triangleq \Theta$ with $\infty>\bar\theta>\underline \theta>-\infty$. We denote the prior expected quality by  $\mathbb{E}_F[\theta]\triangleq m_\varnothing$ and assume that $m_\varnothing>0$.\footnote{Our results do not depend on the sign of $m_\varnothing$. We make this assumption to simplify the exposition.}

Neither the agent nor the principal observe the sampled good's true quality. Instead, the agent can acquire information by purchasing an informative signal that is designed and sold by the principal. Formally, the principal offers a contract $\langle p_t, (\pi_t, S_t)\rangle$ consisting of a price $p_t\in\mathbb{R}_+$ and a signal $(\pi_t, S_t)$ where $\pi_t:\Theta\to \Delta(S_t)$ is a mapping from a good's true quality to a probability distribution over a compact set of possible signal realizations $S_t$. The principal can design any signal in any given period but she has limited commitment, i.e.,  she cannot commit to contracts beyond the current period. 

Given a contract $\langle p_t, (\pi_t, S_t)\rangle$, the agent either accepts or rejects it. If he accepts the contract $(a_t=1)$, he pays $p_t$ and both players observe a common signal realization $s_t\in S_t$ distributed according to $\pi_t(\theta_t)$. If he instead rejects the contract $(a_t=0)$, he pays nothing and no signal is observed. Finally, the agent decides whether to consume the sampled good and stop searching $(d_t=1)$, in which case the game ends, or to continue searching $(d_t=0)$, in which case the game continues on to $t+1$.\footnote{Our model does not allow the agent to  recall a previously sampled good. Consider a richer model in which the agent can recall goods and the principal observes whenever a good is recalled. For any stationary equilibrium of the base game we study, there exists an equilibrium of this richer model in which 1) the principal provides no new information for any recalled good, 2) the agent never recalls on path, and 3) both players receive the same payoffs. } \autoref{fig:timing} depicts the timing of events within a period and \autoref{tab:payoffs} presents the per-period payoffs. The agent and the principal have a common discount factor $\delta\in (0,1)$.

\begin{figure}[ht]
\centering 
\begin{tikzpicture}[scale=0.5]
\tikzset{
    position label/.style={
             text depth = 1ex
    }
}
    \foreach \x in {-5, -1}
      \draw (\x cm,3pt) -- (\x cm,-3pt);
\draw[-] (-6,0) node[left]{\footnotesize{ $t$}}--(2,0);
\draw[-] (2,0)--(3.5,3);
\draw[-] (2,0)--(3.5,-3);
\draw[-] (3.5,-3)--(7.5,-3);
\draw[-] (3.5,3)--(7.5,3);
\draw[-] (7.5,-3)--(12,0);
\draw[-] (7.5,-3)--(12,-5);
\draw[-] (7.5,3)--(12,5);
\draw[-] (7.5,3)--(12,0);
\draw[->] (12,0)--(13,0)node[right]{\footnotesize{ ${t+1}$}};

\node [position label, align=center, below=3pt] (fend) at (-5,0) {\footnotesize{Good} \\ \footnotesize{ of quality}\\ \footnotesize{$\theta_t\sim F$}\\ \footnotesize{sampled}};
\node [position label, align=center, below=3pt] (fend) at (-1,0) {\footnotesize{Principal} \\ \footnotesize{offers}\\ \footnotesize{ $\langle p_t, (\pi_t, S_t)\rangle$}};
\node [position label, rotate=63, align=center, above] (fend) at (3,1.5) {\footnotesize{$a_t=1$}};
\node [position label, rotate=297, align=center, below] (fend) at (2.8,-1.5) {\footnotesize{$a_t=0$}};

\draw[](5.5,3.1)--(5.5,2.9);
\draw[](5.5,-3.1)--(5.5,-2.9);

\node [position label, align=center, above] (fend) at (5.5,3) {\footnotesize{Observe}\\ \footnotesize{$s_t\sim\pi_t(\theta_t)$}};
\node [position label, align=center, below] (fend) at (5.5,-3) {\footnotesize{Observe}\\ \footnotesize{nothing}};

\node [position label, align=center, right] (fend) at (12,5) {\footnotesize{End}};
\node [position label, align=center, right] (fend) at (12,-5) {\footnotesize{End}};

\node [position label, rotate=335, align=center, below] (fend) at (10,-4) {\footnotesize{$d_t=1$}};
\node [position label, rotate=35, align=center, above] (fend) at (10,-1.8) {\footnotesize{$d_t=0$}};

\node [position label, rotate=25, align=center, above] (fend) at (10,3.8) {\footnotesize{$d_t=1$}};
\node [position label, rotate=325, align=center, below] (fend) at (10.2,1.35) {\footnotesize{$d_t=0$}};
  \end{tikzpicture}
  \caption{Timing of events}
  \label{fig:timing}
\end{figure}
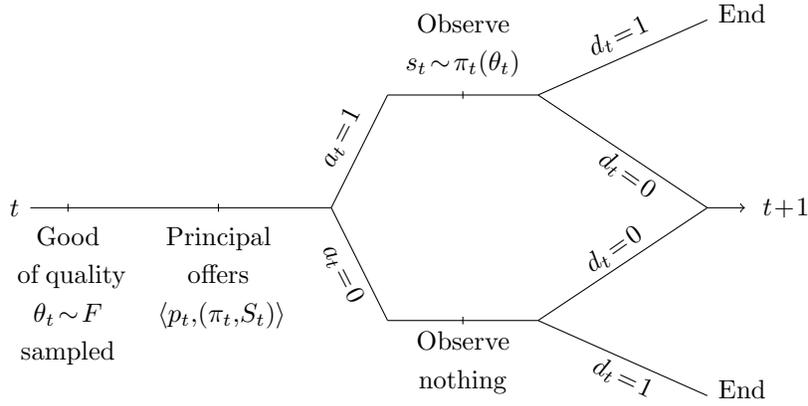 

\begin{table}[ht]
\begin{center}
    \begin{tabular}{c|c|c}
         & $a_t=1$& $a_t=0$ \\[8pt] \hline \hline
        $d_t=1$ & $\theta_t-p_t, p_t$& $\theta_t, 0$\\[8pt] \hline
        $d_t=0$& $-p_t, p_t$ & $0,0$
    \end{tabular}
    \caption{Period-$t$ payoffs for the agent and the principal, respectively.}
    \label{tab:payoffs}
    \end{center}
\end{table}

We investigate stationary (perfect Bayesian) equilibrium outcomes of this game, which are  characterized by a strategy specifying contract proposal for the principal, a strategy specifying contract acceptance and search stopping for the agent, and a belief-updating process such that $(i)$  strategies are stationary, i.e.,  history-independent both on and off the equilibrium path, $(ii)$ strategies are sequentially rational, and $(iii)$ beliefs are derived by Bayes rule. We provide a more formal definition after we define the contract space. 

We make the following assumption moving forward, which we later show is necessary and sufficient for non-trivial contracting between the principal and the agent in equilibrium:
\begin{assumption}\label{ass:1}
$\underline \theta<\delta m_\varnothing$.
\end{assumption}

\hyperref[ass:1]{\Cref{ass:1}} is satisfied for all $\delta \in(0,1)$ if $\underline \theta\leq  0$. However, if $\underline \theta> 0$, the assumption places further restrictions on the values of the discount factor.

\section{Preliminary Analysis}\label{prelim}
\subsection{Signals as posterior-mean distributions}
Suppose the principal proposes a contract $\langle p, (\pi, S)\rangle$ in a given period and the agent rejects the contract. In this case, the agent's expected payoff within the period is zero if he continues his search and $m_\varnothing$ if he ends his search. In contrast, if the agent accepts the contract, he then observes a signal realization $s\in S$ and  updates his belief about the good's quality from his prior $F$ to a posterior  $F_s$ using Bayes rule. The posterior mean would then be $\mathbb{E}_{F_s}[\theta]$. In this case, the agent's expected payoff within the period is $-p$ if he continues his search and  $\mathbb{E}_{F_s}[\theta]-p$ if he ends his search.

Because the signal $(\pi, S)$ impacts the agent's payoffs only through a good's expected quality, it suffices for our purposes to consider the distribution over posterior means that is induced by the signal. For example, a fully informative signal induces the distribution $F$ while an uninformative signal induces the degenerate distribution $G^\varnothing$ given by
\[
G^\varnothing(m)=\left\{\begin{array}{ccc}
     0 &\mbox{if} & m<m_\varnothing  \\
     1 &\mbox{if} & m\geq m_\varnothing 
     \end{array}\right..
\]

From \cite{bla53} and \cite{rot70}, each signal $(\pi, S)$ induces a distribution over posterior means $G^\pi:\Theta\to[0,1]$ that is a mean-preserving contraction of $F$, i.e., 
\[
\int^{\bar \theta}_x G^\pi(m)dm\geq \int^{\bar \theta}_x F(m)dm
\]
for all $x\in\Theta$ with equality at $x=\underline\theta$. Conversely, each distribution $G$ that is a  mean-preserving contraction of $F$ can be induced by some signal. Thus, without loss of generality, we assume that the principal can propose any contract of the form $\langle p, G \rangle \in \mathbb{R}_+\times \mathcal{G}(F)$, where $\mathcal{G}(F)$ is the set of mean-preserving contractions of $F$. Notice that $G^\varnothing$ is a mean-preserving contraction of any $G\in \mathcal{G}(F)$.
\subsection{Signals as convex functions}

Given some posterior-mean distribution $G\in\mathcal{G}(F)$, let $c_G:\Theta\to \mathbb{R}_+$ be defined as
\begin{align*}
 c_G(x)&\triangleq\int^{\bar\theta}_{x}(1-G(m))dm.
\end{align*}

To understand what $c_G(x)$ represents, suppose the agent has a guaranteed payoff of $x\in\Theta$ from some ``outside option." The agent's expected surplus over the outside option from a random draw of distribution $G$ is given by
\begin{align*}
\left(\mathbb{E}_G[m|m\geq x]-x\right)\left(1-G(x)\right)&=\int^{\bar\theta}_x(m-x)dG(m)\\[6pt]
&=c_G(x),
\end{align*}
where the final equality follows from integration by parts.\footnote{\cite{gen16} study a related function, $\tilde c_G(x)\triangleq \int_{\underline\theta}^xG(m)dm$ and use it to provide an alternative characterization for the set of posterior-mean distributions. Since $G$ is a mean-preserving contraction of $F$, the graph for $c_G$ is a reflection of $\tilde c_G$ along the vertical line passing through $\mathbb{E}_F[\theta]$. We proceed with the alternative function $c_G(\cdot)$--which appears in many search  models \citep{mcc70, wei79, wol86, dog18}--because it is more convenient in characterizing the value of search for a good whose quality exceeds a particular value $x$.} 

\begin{remark}
\label{remark:1}
For any $G\in\mathcal{G}(F)$, 
\begin{enumerate}[$(a)$]
\item $c_G(\underline\theta)=m_\varnothing-\underline \theta$,
    \item $c_G(\bar\theta)=0$, 
    \item $c_G$ is continuous, weakly decreasing, and  convex, and
    \item $c_G$ is right differentiable with  $\partial_+ c_G(x)=G(x)-1$.
\end{enumerate}
\end{remark}

Intuitively, if the payoff from the outside option is $\underline \theta$, then a random draw from $G$ yields a higher payoff than the outside option with probability one, so $c_G(\underline\theta)=m_\varnothing-\underline\theta$. In contrast, if the payoff from the outside option is $\bar \theta$, then there is zero probability that a random draw from $G$ yields a higher payoff than the outside option, so $c_G(\bar\theta)=0$. Additionally, consider two outside options with payoffs $x$ and $x'$ with $x'>x$; a draw $m$ from $G$ is more likely to exceed $x$ than $x'$, and even if $m$ exceeds both, $m-x>m-x'$. These two properties combined imply that $c_G(x)$ is a weakly decreasing and convex function of $x$. Convexity further implies that $c_G(x)$ is left- and right-differentiable for all $x\in\Theta$. It suffices to consider only the right derivative for our purposes.

Notice that if $G'$ is a mean-preserving contraction of $G''$, then $c_{G'}\leq c_{G''}$ pointwise. Hence, for all $G\in\mathcal{G}(F)$,  $c_{G^\varnothing}\leq c_G\leq c_F$ pointwise. 

\subsection{Agent's search problem}\label{autarky_section}
Before we analyze the principal-agent game, we first consider a hypothetical setting in which there is no principal, and the agent  observes a realization from some posterior-mean distribution $G\in \mathcal{G}(F)$ for free in each period. This setting is instructive in characterizing when there is scope in our model for contracting between the principal and the agent.  

Let $u(G)$ be the agent's expected payoff in this search problem, which is also his continuation payoff due to the stationary setting. The agent  searches until he finds a good  whose expected quality exceeds his discounted continuation value $\delta u(G)$. Thus, the agent's payoff $u(G)$ is the unique solution to the fixed point (as a function of $u$)
\begin{align*}
 \label{eq:1}
\tag{1}
 u&=\int_\Theta \max\{m, \delta u\}dG(m).
\end{align*}

As the integrand is a convex function of $m$, the agent benefits from a ``more dispersed" distribution of posterior means. Formally, $u(G')\leq u(G'')$ whenever $G'$ is a mean-preserving contraction of $G''$. Thus, the highest payoff that can be generated in the search market is $\bar u\triangleq u(F)>0$ while the lowest payoff that can be generated is $\underline u\triangleq u(G^\varnothing)=m_\varnothing$, which is the payoff from accepting any sampled good in the current period. We refer to $\bar u$ as the McCall payoff---the payoff generated by the agent in \cite{mcc70}---and to $\underline u$ as the autarky payoff. For any $G\in\mathcal{G}(F),$ $u(G)-\underline u$ represents the surplus generated from searching given $G$, and we refer to $\bar u-\underline u$ as the full surplus.

If $\underline u=\bar u$, we say the agent \textit{never searches} because  it is optimal in this case for the agent to stop searching and consume the first good he samples after observing any realization from any $G\in\mathcal{G}(F)$. Intuitively, an agent never searches when the lowest quality good yields a sufficiently high payoff; even if the agent knew he would sample a good of quality $\bar \theta$ in the next period, he would prefer to accept a good of quality $\underline \theta$ today. In other words, if $\underline u=\bar u$ then the agent would have no value for information, making the contracting game between the agent and the principal trivial. \autoref{lemma:1} presents a sufficient and necessary condition for when the agent never searches.

\begin{lemma}
\label{lemma:1}
The agent never searches if and only if $\underline \theta\geq \delta m_\varnothing$. 
\end{lemma}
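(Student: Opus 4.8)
The plan is to work directly with the fixed-point characterization \eqref{eq:1} evaluated at the fully informative signal $G=F$, using the fact that the agent never searches precisely when $\underline u=\bar u$, where $\bar u=u(F)$. The backbone of both directions is the elementary inequality $\int_\Theta \max\{m,r\}\,dF(m)\geq \int_\Theta m\,dF(m)=m_\varnothing$, valid for every $r\in\mathbb{R}$, together with the observation that equality holds if and only if $\max\{m,r\}=m$ for $F$-almost every $m$. Because $F$ is absolutely continuous with support $[\underline\theta,\bar\theta]$, this equality condition is in turn equivalent to $r\leq\underline\theta$. I would also record at the outset two facts already available: $F$ is non-degenerate, so $m_\varnothing>\underline\theta$; and $\bar u>0$, as noted in the text.

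For the ($\Leftarrow$) direction, I would suppose $\underline\theta\geq\delta m_\varnothing$ and first pin down the sign of $m_\varnothing$: combining $m_\varnothing>\underline\theta\geq\delta m_\varnothing$ gives $(1-\delta)m_\varnothing>0$, hence $m_\varnothing>0$ and $\underline u=\max\{m_\varnothing,0\}=m_\varnothing$. Next, since $\delta m_\varnothing\leq\underline\theta\leq m$ for every $m\in\Theta$, I have $\max\{m,\delta m_\varnothing\}=m$ throughout the support, so $u=m_\varnothing$ satisfies \eqref{eq:1} for $G=F$; by uniqueness of the fixed point (the right-hand side is a $\delta$-contraction in $u$), $\bar u=u(F)=m_\varnothing$. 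Therefore $\underline u=\bar u=m_\varnothing$ and the agent never searches.

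For the ($\Rightarrow$) direction I would argue the contrapositive implicitly by starting from $\underline u=\bar u$. Since $\bar u>0$, I get $\underline u>0$, which forces $\underline u=m_\varnothing$ (so $m_\varnothing>0$) and hence $\bar u=m_\varnothing$. Substituting $u=m_\varnothing$ into \eqref{eq:1} for $G=F$ yields $m_\varnothing=\int_\Theta \max\{m,\delta m_\varnothing\}\,dF(m)$, and the equality case of the backbone inequality then delivers $\delta m_\varnothing\leq\underline\theta$. The one step requiring care — and the main, if modest, obstacle — is this equality analysis: I must invoke absolute continuity of $F$ to pass from ``$\max\{m,\delta m_\varnothing\}=m$ for $F$-a.e.\ $m$'' to the pointwise bound $\delta m_\varnothing\leq\underline\theta$. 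Indeed, if instead $\delta m_\varnothing>\underline\theta$, then $F$ assigns positive mass to $[\underline\theta,\delta m_\varnothing)$, where the integrand strictly exceeds $m$, contradicting the equality of integrals. Everything else is bookkeeping about the sign of $m_\varnothing$ and the definition of $\underline u$.
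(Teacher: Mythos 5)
Your proof is correct and follows essentially the same route as the paper's: both directions rest on the fixed-point equation \eqref{eq:1} at $G=F$, the identification $\underline u=\max\{m_\varnothing,0\}$ together with $\bar u>0$, and the observation that $\underline\theta<\delta m_\varnothing$ forces a strict inequality $\int_\Theta\max\{m,\delta m_\varnothing\}\,dF(m)>m_\varnothing$ because $F$ puts positive mass below $\delta m_\varnothing$. The only differences are presentational (you phrase the forward direction as an equality-case analysis rather than an explicit contradiction, and you make the bookkeeping on the sign of $m_\varnothing$ explicit), so no further changes are needed.
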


In light of \autoref{lemma:1}, the restriction in \hyperref[ass:1]{\Cref{ass:1}} that  $\underline \theta< \delta m_\varnothing$  implies that the agent's continuation value is a non-constant function of distribution $G$.  To quantify this dependence, we rewrite the fixed-point problem in \eqref{eq:1} as

\begin{align*}
    u&=\delta u+\int^{\bar\theta}_{\delta u}(m-\delta u)dG(m)\\[6pt]
    \label{eq:1'}
    \tag{$1'$}
    &=\underbrace{\delta u}_{\substack{\text{reservation}\\\text{value}}} + \underbrace{c_G(\delta u)}_{\substack{\text{added value}\\\text{from search}}}.
\end{align*}

Given $G\in\mathcal{G}(F)$, let $r(G)\triangleq \delta u(G)$ denote the agent's reservation value---the expected quality which makes the agent indifferent between stopping his search and continuing when he observes $G$ for free in subsequent periods. We can re-purpose \eqref{eq:1'} so as to characterize $r(G)$ as the unique solution to the fixed point  (as a function of $r$)
\begin{align*}
 r&=\delta\big( r+c_G(r)\big)\\[6pt]
 \label{eq:2}
 \tag{2}
 &=\left(\frac{\delta}{1-\delta}\right)\cdot c_G(r).
\end{align*}
 The fixed point in \eqref{eq:2} is particularly amenable to a geometric representation as shown in \autoref{fig:fix}. Since $c_{G'}\leq c_{G''}$ pointwise whenever $G'$ is a mean-preserving contraction of $G''$, the highest reservation value is $\bar r\triangleq r(F)$ and the lowest reservation value is $\underline r\triangleq r(G^\varnothing)$. We refer to $\bar r$ as the McCall reservation value and to $\underline r$ as the autarky reservation value.\footnote{If the agent has no information, his posterior mean is  $m_\varnothing$ with probability one, which exceeds the autarky reservation value $\underline r$ for all $\delta\in (0,1)$ as in \autoref{fig:fix}. Thus, with no information, the agent immediately stops his search and consumes the first sampled good.}

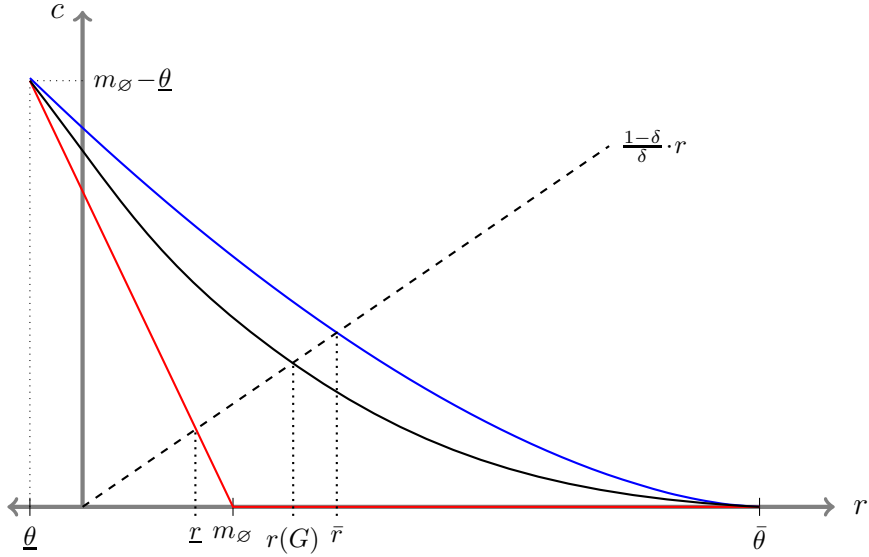
\begin{figure}[ht]
\begin{center}
\begin{tikzpicture}[xscale=10, yscale=6]
\draw [<->, help lines, ultra thick] (0,0) -- (1.1,0);
\draw [->, help lines, ultra thick] (0.1,0) --(.1,1.1);\node[right=3pt] at (1.1,0) {$r$};
\node[left=3pt] at (0.1, 1.1) {$c$};

\draw [] (1,-.02)node[below]{\footnotesize{$\bar\theta$}}--(1,.02);
\draw [] (.03,-.02) node[below]{\footnotesize{$\underline\theta$}}--(.03,.02);
\draw [dotted] (0.1, .945)node[right]{\footnotesize{$m_\varnothing -\underline\theta$}}--(0.03,.945);
\draw [dotted] (.03, 0)--(.03,.945);
\draw [] (.3,-.02) node[below]{\footnotesize{$m_\varnothing$}}--(.3,.02);

\draw[blue, domain=0.03:1,smooth,variable=\x, thick] plot ({\x},{(1-\x)^1.65});
\draw[thick, red](.03, .945)--(.3,0)--(1,0);
\draw[ thick] plot [smooth, tension=2] (0.03,.945)to [in=130, out=295](.37, .33) to [in=175, out=310](1,0);

\draw[thick, dashed] (0.1,0)--(.8,.8)node[right]{\footnotesize{$\frac{1-\delta}{\delta}\cdot r$}};

\draw [dotted, thick] (.38,-.02) node[below]{\footnotesize{$r(G)$}}--(.38,.32);
\draw [dotted, thick] (.25,-.02) node[below]{\footnotesize{$\underline r$}}--(.25,.172);
\draw [dotted, thick] (.438,-.02) node[below]{\footnotesize{$\bar r$}}--(.438,.39);
\end{tikzpicture}
\captionsetup{oneside,margin={0cm,0cm},justification=justified, singlelinecheck=false}
\caption{The dashed line has a slope of $(1-\delta)/\delta$. The red curve is $c_{G^\varnothing}$, the black curve is $c_G$ for some arbitrary $G\in\mathcal{G}(F)$, and the blue curve is $c_F$. The intersection of each respective curve with the dashed line represents the solution to the fixed point problem in \eqref{eq:2}.}
\label{fig:fix}
\end{center}
\end{figure}

\subsection{Benchmark: Long-term Contracts and Full Surplus Extraction}\label{sec:benchmark}
In our model, there is no private information on the part of the agent or the principal. Therefore, a principal with long-term commitment power can extract the full surplus of search in equilibrium. For example, the principal could propose a sequence of contracts $\langle p_t, G_t\rangle_{t\geq 1}$ such that $G_t=F$ for all $t\geq 1$, $p_t=0$ for all $t>1$, and $p_1=\bar u-\underline u$. If the agent rejects this sequence of contracts (at $t=1$) then the principal commits to ending the relationship, i.e., by offering a sequence of contracts $\langle p_t, G_t\rangle_{t\geq 1}$ such that $G_t=G^\varnothing$ and $p_t=0$ for all $t>1$. Such a sequence of contracts extracts the full surplus from the agent and is equivalent to ``selling the information production technology" to the agent. However, without long-term commitment power,  the principal would deviate away from charging $p_t=0$  in periods $t>1$, and thus, the agent would never accept the contract. 

Despite the lack of private information in our setting, it is not a priori evident that the principal can achieve as a high a payoff without long-term commitment power; after all, the set of perfect Bayesian equilibrium outcomes could be a strict subset of the set of Nash equilibrium outcomes.

\section{Stationary Equilibrium}\label{stationary}
In this section, we characterize stationary equilibrium outcomes under limited commitment. To that end, we make two observations that simplify our task. First,  it suffices for our purposes to consider only pure strategies because, since all players are risk neutral and the space of contracts is convex, the principal does not gain from randomizing over contracts.\footnote{Additionally, to sustain an equilibrium, we also assume that whenever the agent is indifferent between any two actions on the equilibrium path, he chooses the one that maximizes the principal's payoff. } Second,  we can restrict attention to contracts that induce the agent to accept since the agent rejecting a contract is payoff-equivalent for all players to the agent accepting $\langle 0, G^\varnothing\rangle$.

\begin{definition}
\label{def:1}
\normalfont A stationary equilibrium is defined by a pair of agent-principal continuation values $(U, V)\in \mathbb{R}_+^2$ and a contract $\langle p,G\rangle\in\mathbb{R}_+\times \mathcal{G}(F)$ such that in each period and after every history:
\begin{enumerate}
    \item The agent stops searching and consumes a good of expected quality $m\in\Theta$ if and only if 
    \[
     \label{eq:os}
    \tag{OS}
    m>\delta U,
    \]
    \item The agent accepts a contract $\langle \widehat p,\widehat G\rangle\in\mathbb{R}_+\times\mathcal{G}(F)$  if and only if
    \begin{align*}
        \underbrace{\delta U+c_{\widehat G}(\delta U)-\widehat p}_{\text{payoff from accept} }& \geq  \underbrace{\delta U+c_{ G^\varnothing}(\delta U)}_{\text{payoff from reject}} \\[8pt]
        \label{eq:pc}
    \tag{PC}
        \Leftrightarrow   c_{\widehat G}(\delta U)-c_{G^\varnothing}(\delta U)& \geq \widehat p,
    \end{align*}
       
\item The principal proposes contract $\langle p,G\rangle$ to maximize her profit, i.e.,
\[
\label{eq:pm}
\tag{PM}
 \langle p,G\rangle\in \argmax_{\langle \widehat p, \widehat G\rangle\in\mathbb{R}_+\times \mathcal{G}(F)}\hspace*{.2em} \widehat p+ \widehat G(\delta U) \delta V \hspace*{.5em} \text{ subject to \eqref{eq:pc}},
\]   
\item Payoffs are self-generating, i.e., 
\[
\label{eq:sga}
\tag{SG-A}
U=\delta U+c_{G}(\delta U)-p 
\] 
    and
    \[
    \label{eq:sgp}
\tag{SG-P}
V= p + G(\delta U) \delta V.
\] 
\end{enumerate}
\end{definition}

To gain some intuition for the economic forces at play, let us first consider the principal's intra-temporal trade-off\textemdash the trade-off the principal faces when proposing a contract in any given period $t\geq 1$ while holding the continuation values $(U,V)$ fixed. By selling a signal that induces a posterior-mean distribution  $\widehat G$ in period $t$, the principal can at most earn a revenue of $c_{\widehat G}(\delta U)-c_{G^\varnothing}(\delta U)$, which is captured by \eqref{eq:pc}. Additionally, she earns a payoff of $\delta V$ by persuading the agent to continue searching in period $t$ with probability $\widehat G(\delta U)$, which is derived from \eqref{eq:os}.  The principal can extract more surplus in period $t$ by selling a more informative signal, but a more informative signal can lead the agent to stop searching with a higher probability.  Thus, the principal trades off surplus extraction and persuasion as captured by the constrained profit maximization problem \eqref{eq:pm}.  

However, we cannot consider the intra-temporal trade-off of any given period in isolation. This is evident from the fact that the surplus the principal can extract and the probability with which she can persuade the agent to continue searching in any given period both depend on the continuation values, which are derived from the contracts the principal offers in subsequent periods. In a stationary equilibrium, the same contract must resolve the intra-temporal trade-off in each period while also taking into account how these trade-offs are inter-temporally linked. Formally, a contract $\langle p, G\rangle$ supports a stationary equilibrium if it solves \eqref{eq:pm} for given a pair of continuation values $(U,V)$, which are themselves generated from $\langle p, G\rangle$ according to \eqref{eq:sga} and \eqref{eq:sgp}.

Our main result below establishes not only the existence of a stationary equilibrium but also that all stationary equilibria are payoff equivalent. In any stationary equilibrium, the McCall payoff $\bar u$ is generated but the agent's payoff is the same as his autarky payoff $\underline u$. The full surplus $\bar u-\underline u$ is extracted by the principal through dribs and drabs by charging a per-period price that is a (possibly small) fraction of the overall value of search.\footnote{As we show in point $(iii)$ of \autoref{prop:1}, the per-period price is strictly below $\bar u-\underline u$ for any $\delta>0$. Additionally, as $\delta\to 1$, the per-period price converges to zero while the duration of search goes to infinity. } Full surplus extraction implies that the agent searches as if he observes a fully informative signal  for free, even though his payoff is as if he observes no information.

In order to state our main result, we introduce a particular family of ``pass/fail" signals: for  $x\in\Theta$,  let $(\pi(x), \{\text{pass, fail}\})$ be given by 
 \[
\pi(\text{pass}|\theta;x)=\left\{\begin{array}{ccl}
     0 &\mbox{if} & ~~~\theta\leq  x \\
     1 &\mbox{if} & ~~~\theta> x
     \end{array}\right..
\] 
We will refer to such signals simply as $\pi(x)$ for  $x\in\Theta$, and denote the posterior-mean distribution it induces by $G^{\pi(x)}\in\mathcal{G}(F)$. Notice that both $\pi(\underline \theta)$ and $\pi(\bar \theta)$  induce $G^{\pi(\underline\theta)}=G^{\pi(\bar\theta)}=G^\varnothing$ as almost all goods pass for the former and  all goods fail for the latter. On the other hand, when $x$ is an interior point, i.e., $x\in \text{int}(\Theta)$, $\pi(x)$ induces a binary posterior-mean distribution given by 
 \[
 \label{eq:3}
\tag{3}
G^{\pi(x)}(m)=\left\{\begin{array}{ccl}
     0 &\mbox{if} & ~~~m<\mathbb{E}_F[\theta|\theta\leq  x]  \\
       F(x) &\mbox{if} &~~~ \mathbb{E}_F[\theta|\theta\leq  x]\leq m<\mathbb{E}_F[\theta|\theta>  x]  \\
     1 &\mbox{if} & ~~~m\geq \mathbb{E}_F[\theta|\theta>  x]
     \end{array}\right..
\]
We call $G^{\pi(x)}$ in \eqref{eq:3} a binary distribution as it has mass at only two values: $\mathbb{E}_F[\theta|\theta\leq  x]$ and $\mathbb{E}_F[\theta|\theta>  x]$. This family of distributions plays an important role in characterizing stationary equilibria, and we discuss its properties imminently as part of the argument for \autoref{prop:1}.

\begin{theorem}\label{prop:1}\
A stationary equilibrium exists. A pair of agent-principal continuation values $(U, V)\in \mathbb{R}_+^2$ and a contract $\langle p,G\rangle\in\mathbb{R}_+\times \mathcal{G}(F)$ constitute a stationary equilibrium if and only if
\begin{enumerate}[$(i)$]
    \item  $U=\underline u$, 
    \item  $V=\bar u-\underline u$, 
        \item  $p=(\bar u-\underline u)\big(1-\delta F( \bar r)\big)$,
    \item $G(\underline r)=F( \bar r)$, and
    \item $c_G\geq c_{G^{\pi(\bar r)}}$ pointwise with $c_G(\underline r)= c_{G^{\pi(\bar r)}}(\underline r)$.
\end{enumerate}
\end{theorem}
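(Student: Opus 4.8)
The plan is to solve the principal's problem \eqref{eq:pm} for fixed continuation values, feed the optimum into the self-generation conditions \eqref{eq:sga}--\eqref{eq:sgp} to pin down $(U,V)$, and then read off the structure of the optimal signal. I would begin by reducing \eqref{eq:pm}: since the objective is strictly increasing in $\widehat p$ and \eqref{eq:pc} is the only constraint, every optimum sets $\widehat p = c_{\widehat G}(\delta U)-c_{G^\varnothing}(\delta U)$, so \eqref{eq:pc} binds. Substituting this price into \eqref{eq:sga} gives $U=\delta U + c_{G^\varnothing}(\delta U)$, which is exactly the autarky fixed point \eqref{eq:2} associated with $G^\varnothing$; hence $U=\underline u$ and $\delta U=\underline r$, establishing $(i)$. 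Putting the binding price back into the objective, the principal's choice of signal then reduces to maximizing over $\widehat G\in\mathcal G(F)$ the functional $\Phi(\widehat G)\triangleq c_{\widehat G}(\underline r)+\delta V\,\widehat G(\underline r)$, where I note $V\ge 0$ because proposing $\langle 0,G^\varnothing\rangle$ secures nonnegative continuation profit.

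The heart of the argument is a geometric analysis of $\Phi$. Using $\widehat G(\underline r)=1+\partial_+ c_{\widehat G}(\underline r)$ from \autoref{remark:1}$(d)$, I would rewrite $\Phi(\widehat G)-\delta V = c_{\widehat G}(\underline r)+\delta V\,\partial_+ c_{\widehat G}(\underline r)$, which is precisely the value at the point $\underline r+\delta V$ of the tangent line to the convex function $c_{\widehat G}$ drawn at $\underline r$. Since $c_{\widehat G}$ is convex and bounded above by $c_F$, this tangent line lies weakly below $c_F$ everywhere, so $\Phi(\widehat G)-\delta V\le c_F(\underline r+\delta V)$, with equality exactly when $c_{\widehat G}$ is affine on $[\underline r,\underline r+\delta V]$ and meets $c_F$ at $\underline r+\delta V$. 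This gives a closed form for the optimal profit as a function of $V$, and \eqref{eq:sgp} becomes the fixed point $V=c_F(\underline r+\delta V)+\delta V-c_{G^\varnothing}(\underline r)$. The associated map has derivative $\delta F(\underline r+\delta V)\in[0,1)$, so it is a contraction with a unique fixed point; checking that $V=\bar u-\underline u$ solves it (using $c_F(\bar r)=(1-\delta)\bar u$ and $c_{G^\varnothing}(\underline r)=(1-\delta)\underline u$) yields $(ii)$ and, crucially, $\underline r+\delta V=\bar r$.

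With $\underline r+\delta V=\bar r$ fixed, every equilibrium signal must maximize $\Phi$, so $c_G$ is affine on $[\underline r,\bar r]$ with $c_G(\bar r)=c_F(\bar r)$. I would pin down its slope from the two-sided constraint $c_G\le c_F$: lying below $c_F$ immediately to the right of $\bar r$ forces the right-derivative there to be at most $\partial_+ c_F(\bar r)=F(\bar r)-1$, while convexity forces it to be at least $F(\bar r)-1$, so the slope on $[\underline r,\bar r]$ equals $F(\bar r)-1$. This gives $(iv)$ and identifies $c_G\equiv c_{G^*}$ on $[\underline r,\bar r]$, i.e.\ the equality at $\underline r$ in $(v)$. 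The remaining inequality $c_G\ge c_{G^*}$ I expect to come for free from slope comparisons anchored at points where the two functions agree: where $c_{G^*}$ has slope $-1$, the universal bound $\partial_+ c_G\ge -1$ gives $c_G\ge c_{G^*}$ (anchored at $c_G(\underline\theta)=m_\varnothing-\underline\theta$), and where $c_{G^*}$ has slope $F(\bar r)-1$, convexity of $c_G$ together with its matching slope at $\underline r$ and $\bar r$ controls the sign of $\partial_+ c_G-(F(\bar r)-1)$ on each side and again forces $c_G\ge c_{G^*}$. Condition $(iii)$ then follows by simplifying the binding price $p=c_G(\underline r)-c_{G^\varnothing}(\underline r)=c_{G^*}(\underline r)-(m_\varnothing-\underline r)$. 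For the converse, I would check that any $(U,V,p,G)$ satisfying $(i)$--$(v)$ obeys \eqref{eq:os}--\eqref{eq:sgp}: $(iv)$--$(v)$ force $c_G$ to coincide with $c_{G^*}$ on $[\underline r,\bar r]$ (a convex function lying above the affine $c_{G^*}$ and agreeing with it at both endpoints of $[\underline r,\bar r]$ must equal it there), so $G$ attains the maximum of $\Phi$ and the self-generation identities reduce to the computations above; existence follows from $G=G^*$, which trivially satisfies $(iv)$--$(v)$ by \autoref{lemma:2}.

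The step I expect to be most delicate is the geometric characterization of the maximizers of $\Phi$. Getting the tangent-line reformulation exactly right, and then using the constraint $c_G\le c_F$ on \emph{both} sides of $\bar r$ to pin the slope to $F(\bar r)-1$, is what makes $(iv)$ and the equality in $(v)$ emerge, and it is also what couples the static signal optimization to the intertemporal contraction that fixes $V$; I would also take care to confirm that the induced $c_G$ is genuinely realizable in $\mathcal C(F)$ and that, as claimed in the footnote to \autoref{prop:1}, the equilibrium signal leaves no posterior mean exactly at $\underline r$, so that the stopping rule \eqref{eq:os} is applied without indifference.
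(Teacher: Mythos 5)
Your proposal is correct in substance but follows a genuinely different route from the paper's proof of \autoref{prop:1}. The paper, after establishing (as you do) that \eqref{eq:pc} binds and $U=\underline u$, rewrites the principal's problem in the ratio form \eqref{eq:44}, $V=\max_{\widehat G}\bigl(c_{\widehat G}(\underline r)-c_{G^\varnothing}(\underline r)\bigr)/\bigl(1-\delta \widehat G(\underline r)\bigr)$, and then reduces to \emph{binary} distributions: from any equilibrium $G$ it constructs the two-point pooling $G^\dagger$ with the same objective value, shows the objective at any binary $\widehat G$ equals $\bigl(r(\widehat G)-\underline r\bigr)/\delta$, and invokes \autoref{lemma:2} (uniqueness of $G^*$ among binary distributions attaining $\bar r$) to conclude $V=\bar u-\underline u$ together with $(iv)$ and $(v)$. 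You instead keep $V$ as a parameter in the non-ratio objective, observe that $c_{\widehat G}(\underline r)+\delta V\widehat G(\underline r)-\delta V$ is the tangent line of $c_{\widehat G}$ at $\underline r$ evaluated at $\underline r+\delta V$, bound it above by $c_F(\underline r+\delta V)$, and close the loop with a contraction fixed-point argument in $V$. This is essentially the specialization, to a degenerate public signal, of the paper's own proof of \autoref{prop:2} (where the cutoff $\delta V+r^\xi$ and the value function $\psi_z(V)$ play exactly the roles of your $\underline r+\delta V$ and closed-form profit), so it is a legitimate and arguably more unified argument; what the paper's binary-reduction buys is that it never needs to verify that an upper bound is attained, while your route makes the cutoff economics transparent and delivers the \autoref{cor:1}-style slope-pinning along the way.

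The one step you must actually complete --- and which you flag yourself --- is attainability: writing \eqref{eq:sgp} as $V=c_F(\underline r+\delta V)+\delta V-c_{G^\varnothing}(\underline r)$ presumes the tangent-line bound is achieved by some $\widehat G\in\mathcal G(F)$, i.e., that some $c_{\widehat G}$ is affine on $[\underline r,\underline r+\delta V]$ and meets $c_F$ at $\underline r+\delta V$. The lower-censorship distribution with cutoff $x^*=\underline r+\delta V$ does this if and only if $\mathbb{E}_F[\theta\mid\theta\le x^*]\le \underline r$; this holds for all $V\le \bar u-\underline u$ by point $(iv)$ of \autoref{lemma:2} and monotonicity of $x\mapsto \mathbb{E}_F[\theta\mid\theta\le x]$, but it is not automatic in general (this constraint is exactly the force behind $\bar x_z$ and the efficiency-loss term in \autoref{prop:2}). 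So the necessity direction needs a two-case argument: any equilibrium $V$ satisfies $V=\psi(V)\le T(V)$, where $\psi$ is the true value function and $T$ your contraction map; if $V>\bar u-\underline u$ then $T(V)<V$, a contradiction; and if $V\le \bar u-\underline u$ then attainability gives $\psi(V)=T(V)$, so uniqueness of the fixed point yields $V=\bar u-\underline u$. With that inserted, your derivation of $(iv)$ and the equality in $(v)$ from matching left/right derivatives of $c_G$ and $c_F$ at $\bar r$, your supporting-line argument for $c_G\ge c_{G^*}$ globally, and your converse direction are all sound and consistent with the paper's conclusions.
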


We present a formal proof of this result in the appendix. Here we sketch the argument that the principal extracts the full surplus in a stationary equilibrium, i.e., points $(i)$ and $(ii)$ of the theorem must be satisfied in any stationary equilibrium.

First, \eqref{eq:pc} must bind in each period following any history: $c_{G}(\delta U)-c_{G^\varnothing}(\delta U)= p$. Intuitively, the agent has no bargaining power because he receives a take-it-or-leave-it offer in every period, and because his continuation strategy must be history-independent in a stationary equilibrium. Thus, if \eqref{eq:pc} did not bind in a given period, the principal could profitably deviate by increasing the price without affecting the agent's continuation value (and hence stopping decision). Given the participation constraint binds in each period, the agent must receive his autarky payoff, i.e. $U=\underline u.$

Having pinned down the agent's continuation value, we take a geometric approach to analyze the principal's intra-temporal trade-off between surplus extraction and persuasion in \eqref{eq:pm}. Because the agent has a binary decision regarding his search after observing the signal realization--either to continue or to terminate--it suffices to consider binary posterior-mean distributions to solve \eqref{eq:pm}. To that end, consider an arbitrary binary distribution $G'$, where $c_{G'}$ is depicted by the black piece-wise linear convex curve in \autoref{fig:44a}. The vertical difference between $c_{G'}$  and $c_{G^\varnothing}$ at $\delta U=\underline r$ captures the extracted surplus in each period\textemdash it is the price $p'$  which makes the agent's participation constraint bind\textemdash while the slope of $c_{G'}$ at $\underline r$ captures persuasion\textemdash by point $(d)$ of \autoref{remark:1},  $\partial_+c_{G'}(\underline r)+1=G'(\underline r)$ is the probability with which the agent continues searching.

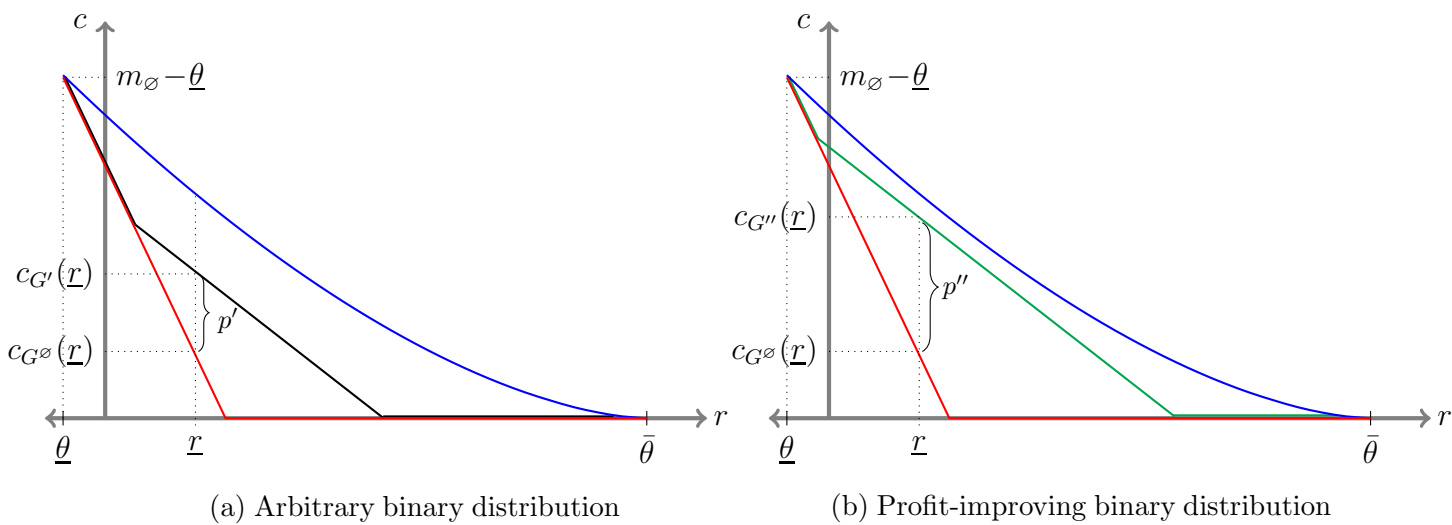
\begin{figure}[ht]
\begin{center}
\hspace{-50mm}
\captionsetup[subfigure]{oneside,margin={1cm,-4cm}}\begin{subfigure}[t]{0.4\textwidth}
\centering
\begin{tikzpicture}[xscale=8, yscale=4.8]
\draw [<->, help lines, ultra thick] (0,0) -- (1.1,0);
\draw [->, help lines, ultra thick] (0.1,0) --(.1,1.1);\node[right=3pt] at (1.08,0) {$r$};
\node[left=3pt] at (0.1, 1.1) {$c$};

\draw [] (1,-.02)node[below]{$\bar\theta$}--(1,.02);
\draw [] (.03,-.02) node[below]{$\underline\theta$}--(.03,.02);
\draw [dotted] (0.1, .945)node[right]{$m_\varnothing -\underline\theta$}--(0.03,.945);
\draw [dotted] (.03, 0)--(.03,.945);

\draw[ thick](.0327, .947)--(.15,.5365);
\draw[ thick](.562,.005)--(.945,.005);
\draw[ domain=0.15:.5629,smooth,variable=\x, thick] plot ({\x},{-1.3*(\x-.255)+.4});


\draw[blue, domain=0.03:1,smooth,variable=\x, thick] plot ({\x},{(1-\x)^1.65});
\draw[thick, red](.03, .945)--(.3,0)--(1,0);

\draw[dotted] (0.1,0.185)node[left]{$c_{G^\varnothing}(\underline r)$}--(.255,.185);
\draw[dotted] (0.1,0.4)node[left]{$c_{G'} (\underline r)$}--(.255,.4);

\draw [decorate, decoration={brace, amplitude=4pt}](0.255,0.4) -- (0.255,.185) node [black,midway,xshift=0.4 cm, yshift=-.1cm]
{\footnotesize $p'$};


\draw [dotted] (.25,-.02) node[below]{$\underline r$}--(.25,.615);\end{tikzpicture}
\caption{Arbitrary binary distribution}
\label{fig:44a}
\end{subfigure} 
\hspace*{8em}
\captionsetup[subfigure]{oneside,margin={5cm,-10cm}}\begin{subfigure}[t]{0.3\textwidth}
\centering
\begin{tikzpicture}[xscale=8, yscale=4.8]
\draw [<->, help lines, ultra thick] (0,0) -- (1.1,0);
\draw [->, help lines, ultra thick] (0.1,0) --(.1,1.1);\node[right=3pt] at (1.08,0) {$r$};
\node[left=3pt] at (0.1, 1.1) {$c$};

\draw [] (1,-.02)node[below]{$\bar\theta$}--(1,.02);
\draw [] (.03,-.02) node[below]{$\underline\theta$}--(.03,.02);
\draw [dotted] (0.1, .945)node[right]{$m_\varnothing -\underline\theta$}--(0.03,.945);
\draw [dotted] (.03, 0)--(.03,.945);

\draw[ Green, thick](.0327, .947)--(.082,.775);
\draw[Green, thick](.67,.008)--(.945,.008);
\draw[Green, domain=0.077:.674,smooth,variable=\x, thick] plot ({\x},{-1.3*(\x-.255)+.55});

\draw[blue, domain=0.03:1,smooth,variable=\x, thick] plot ({\x},{(1-\x)^1.65});
\draw[thick, red](.03, .945)--(.3,0)--(1,0);

\draw[dotted] (0.1,0.185)node[left]{$c_{G^\varnothing}(\underline r)$}--(.255,.185);
\draw[dotted] (0.1,0.558)node[left]{$ c_{G''}(\underline r)$}--(.257,.558);

\draw [decorate,decoration={brace,amplitude=6pt}]
(0.255,0.542) -- (0.255,.185) node [black,midway,xshift=0.45cm]
{\footnotesize $p''$};

\draw [dotted] (.25,-.02) node[below]{$\underline r$}--(.25,.553);
\end{tikzpicture}
\vspace*{-7mm}\caption{Profit-improving binary distribution}
\label{fig:44b}
\end{subfigure}
 \captionsetup{justification=justified,singlelinecheck=false}
\caption{In both panels, the red curve is~$c_{G^\varnothing}$ and the blue curve is $c_F$. Panel (a) depicts $p'=c_{G'}(\underline r)-c_{G_\varnothing}(\underline r)$ and function $c_{G'}$ in black. Panel (b) depicts $p''=c_{G''}(\underline r)-c_{G_\varnothing}(\underline r)$ and function $c_{G''}$ in~green. $c_{G''}$ has the same slope as $c_{G'}$ at point $\underline r$ but $G'$ is a mean-preserving contraction of $G''$, which is depicted by the fact that $c_{G'}\leq c_{G''}$ pointwise. }
\label{fig:44}
\end{center}
\end{figure}

Therefore, the principal's objective in \eqref{eq:pm} is satisfied by finding a piece-wise linear function $c_G$ that, at $\underline r$, optimally trades off being as large and as flat as possible, this is, between higher surplus extraction and a higher probability that the agent continues his search. Notice that the candidate binary distribution $G'$ delivers a lower profit than the binary distribution $G''$, where $c_{G''}$ is depicted by the green piece-wise linear convex curve in \autoref{fig:44b}:  at $\underline r$, both $c_{G''}$ and $c_{G'}$ have the same slope but the former has a strictly higher value, which implies that the principal persuades the agent to continue searching with the same probability by offering either $G'$ or $G''$  but she charges a strictly higher price for $G''$. Consequently, $c_G$ must be tangent to $c_F$ at some point of $\Theta$. Formally, there is some $x\in(\underline \theta,\bar \theta)$ such that $G=G^{\pi(x)}$--which, recall, is induced by a ``pass/fail" signal that sends a ``fail" signal realization for any $\theta \leq x$ and a ``pass" realization for any $\theta> x$. The agent continues searching when he sees a ``fail" if $\mathbb{E}_F[\theta|\theta\leq  x]\leq \underline r$  and terminates his search when he sees a ``pass" if $\mathbb{E}_F[\theta|\theta>  x]>\underline r$. 

Of course, in a stationary equilibrium, the value of the principal's profit maximization problem \eqref{eq:pm} must also be consistent with the self-generation condition \eqref{eq:sgp}. We show that, once we incorporate the self-generation condition in the profit maximization problem, and restrict attention to the family of ``pass/fail" signals $\{G^{\pi(x)}\}_{x\in (\underline \theta, \bar \theta)}$ as defined in \eqref{eq:3}, the principal's infinite-dimensional profit-maximization problem reduces to a \emph{single-dimensional constrained surplus maximization} problem:
\begin{align*}
V=&\max_{x\in(\underline \theta, \bar \theta)} \hspace*{.2em} u(G^{\pi(x)})-\underline u \\[6pt]
\label{eq:fail}
\tag{fail-IC}
\text{s.t. } & \mathbb{E}_F[\theta|\theta\leq  x]\leq \underline r\\[6pt]
\label{eq:pass}
\tag{pass-IC}
&\mathbb{E}_F[\theta|\theta>  x]>\underline r .
\end{align*}

A natural candidate solution is $x=\bar r$, which induces the agent to stop her search if and only if a good's quality exceeds the McCall reservation value. If we disregard the incentive compatibility constraints, it is clear that this point of tangency generates the highest possible surplus of $\bar u-\underline u$. 

We conclude the sketch by arguing that the candidate solution of $x=\bar r$ indeed satisfies \eqref{eq:fail} and \eqref{eq:pass}. To see why the two constraints are satisfied, consider \autoref{fig:optimal_pass_fail} which depicts $c_{G^{\pi(\bar r)}}$ as the black piece-wise convex curve that is tangent to $c_F$ at $x=\bar r$. The kinks occur at the support points $m_1^*=\mathbb{E}_F[\theta|\theta\leq \bar r]$ and $m_2^*=\mathbb{E}_F[\theta|\theta>\bar r]$. It is clear from the figure that $m_1^*<\underline r<m_2^*$, implying that \eqref{eq:fail} and \eqref{eq:pass} are both slack. Indeed, this observation is not a coincidence but a general property; we prove that if the IC constraints are not slack at $x=\bar r$, then the agent would have zero willingness-to-pay for any signal the principal could offer, which cannot occur in non-trivial settings that satisfy \hyperref[ass:1]{\Cref{ass:1}}. Therefore, $x=\bar r$ is the solution to the constrained surplus maximization problem, and  $V=\bar u-\underline u$.

\bigskip

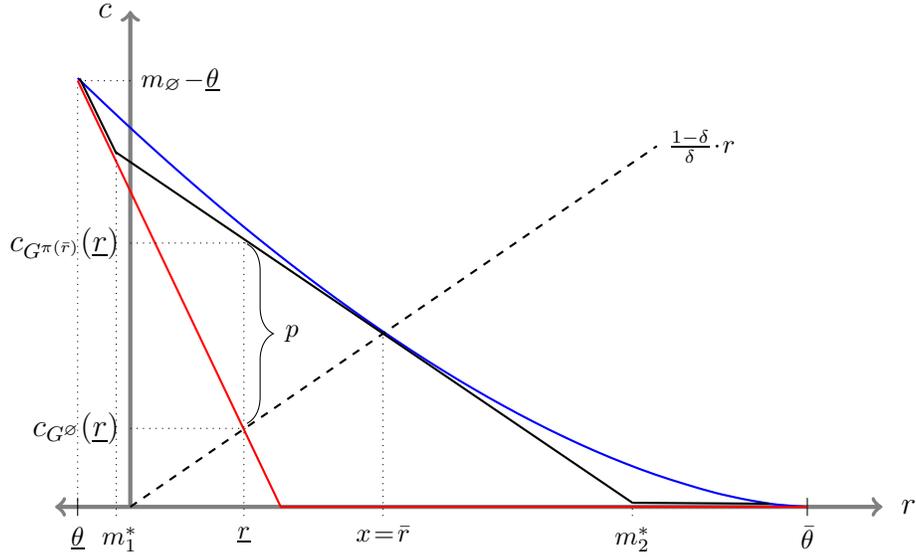
\begin{figure}[ht]
\begin{center}
\begin{tikzpicture}[xscale=10, yscale=6]
\draw [<->, help lines, ultra thick] (0,0) -- (1.1,0);
\draw [->, help lines, ultra thick] (0.1,0) --(.1,1.1);\node[right=3pt] at (1.1,0) {$r$};
\node[left=3pt] at (0.1, 1.1) {$c$};

\draw [] (1,-.02)node[below]{\footnotesize{$\bar\theta$}}--(1,.02);
\draw [] (.03,-.02) node[below]{\footnotesize{$\underline\theta$}}--(.03,.02);
\draw [dotted] (0.1, .945)node[right]{\footnotesize{$m_\varnothing -\underline\theta$}}--(0.03,.945);
\draw [dotted] (.03, 0)--(.03,.945);

\draw[blue, domain=0.03:1,smooth,variable=\x, thick] plot ({\x},{(1-\x)^1.65});
\draw[thick, red](.03, .945)--(.3,0)--(1,0);
\draw[domain=0.0799:.768,smooth,variable=\x, thick] plot ({\x},{-1.131895*(\x-.44)+.3798});
\draw[ thick](.0327, .95)--(.0813,.784);
\draw[ thick](.7675,.009)--(.95,.007);

\draw[dotted](.0813,-0.02)node[below]{\footnotesize{~$m^*_1$}}--(.0813,.784);
\draw[ dotted](.7675,-0.02)node[below]{\footnotesize{$m^*_2$}}--(.7675,0.0);
\draw[thick, dashed] (0.1,0)--(.8,.8)node[right]{\footnotesize{$\frac{1-\delta}{\delta}\cdot r$}};

\draw [dotted] (.251,-.02) node[below]{\footnotesize{$\underline r$}}--(.251,.59);
\draw [dotted] (.436,-.02) node[below]{\footnotesize{$x=\bar r$}}--(.436,.39);

\draw [decorate,decoration={brace,amplitude=10pt}]
 (0.255,0.583) -- (0.255,.185) node [black,midway,xshift=0.6cm]
 {\footnotesize $p$};

 \draw[dotted] (0.1,0.174)node[left]{$c_{G^\varnothing}(\underline r)$}--(.255,.174);
\draw[dotted] (0.1,0.585)node[left]{$c_{G^{\pi(\bar r)}} (\underline r)$}--(.255,.585);





\end{tikzpicture}
\captionsetup{justification=justified,singlelinecheck=false}
\caption{The red curve is $c_{G^\varnothing}$, the blue curve is $c_F$, and the black curve is $c_{G^{\pi(\bar r)}}$. Note that the support points satisfy $m_1^*<\underline r<\bar r<m_2^*$.}
\label{fig:optimal_pass_fail}
\end{center}
\end{figure}

We now make a number of observations regarding the findings of \autoref{prop:1}. First, while the price and the continuation values in a stationary equilibrium are uniquely pinned down, there are  an uncountable number of posterior-mean distributions that can support a stationary equilibrium. In contrast to the class of pass/fail signals we highlight in our proof sketch, consider a lower-censorship signal that  reveals a good's quality if it exceeds $\bar r$ and pools all goods whose quality lies below $\bar r$ by labeling them as ``fail." This signal induces the distribution $G^{LC}$ given by 
\[
G^{LC}(m)=\left\{\begin{array}{ccl}
   0 &\mbox{if} &~~~ m<\mathbb{E}_F[\theta|\theta\leq \bar   r]  \\
       F(\bar r) &\mbox{if} &~~~ \mathbb{E}_F[\theta|\theta\leq \bar r]\leq m<  \bar r  \\
     F(m) &\mbox{if} & ~~~m\geq \bar r
     \end{array}\right. .
\]
Note that this lower-censorship signal is Blackwell more informative than the pass/fail signal that induces $G^{\pi(\bar r)}$. Hence, $G^{\pi(\bar r)}$ is a mean-preserving contraction of $G^{LC}$---implying $c_{G^{\pi(\bar r)}}\leq c_{G^{LC}}$ pointwise as can be seen in \autoref{fig:optdist}---and $G^{LC}$ also supports a stationary equilibrium as it satisfies points $(iv)$ and $(v)$ of \autoref{prop:1}. Despite the multitude of equilibrium distributions, all of them take the same value on $[\underline r,  \bar r]$. Thus, all signals supporting a stationary equilibrium induce the same search behavior in the agent, and hence the principal and agent are both indifferent across the set of equilibrium signals.
 
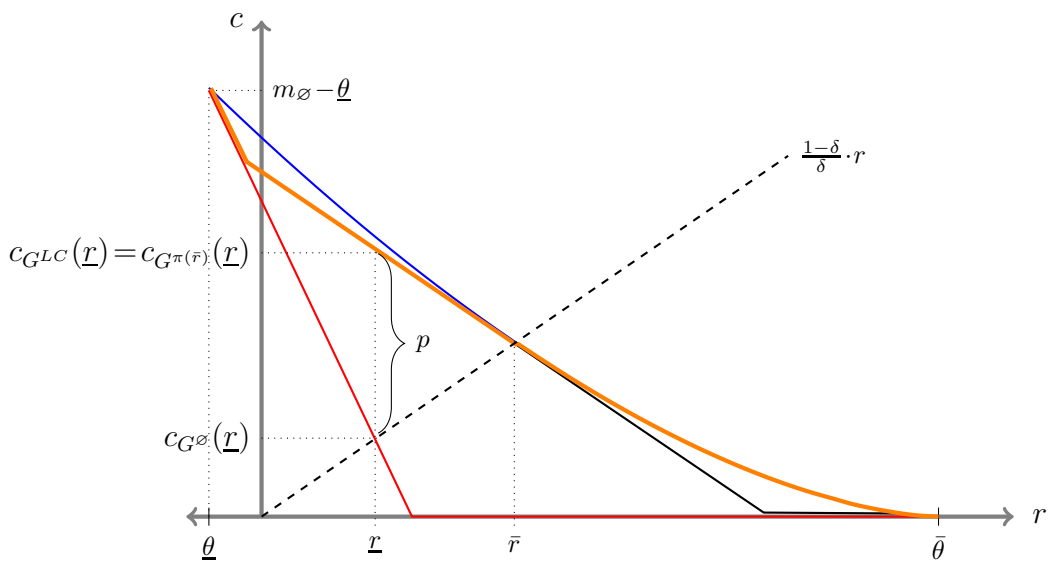
\begin{figure}[ht]
\begin{center}
\begin{tikzpicture}[xscale=10, yscale=6]
\draw [<->, help lines, ultra thick] (0,0) -- (1.1,0);
\draw [->, help lines, ultra thick] (0.1,0) --(.1,1.1);\node[right=3pt] at (1.1,0) {$r$};
\node[left=3pt] at (0.1, 1.1) {$c$};

\draw [] (1,-.02)node[below]{\footnotesize{$\bar\theta$}}--(1,.02);
\draw [] (.03,-.02) node[below]{\footnotesize{$\underline\theta$}}--(.03,.02);
\draw [dotted] (0.1, .945)node[right]{\footnotesize{$m_\varnothing -\underline\theta$}}--(0.03,.945);
\draw [dotted] (.03, 0)--(.03,.945);

\draw[blue, domain=0.03:1,smooth,variable=\x, thick] plot ({\x},{(1-\x)^1.65});
\draw[thick, red](.03, .945)--(.3,0)--(1,0);
\draw[opacity=.3, domain=0.0799:.768,smooth,variable=\x, thick] plot ({\x},{-1.131895*(\x-.44)+.3798});
\draw[opacity=.3, thick](.0327, .95)--(.0813,.784);
\draw[opacity=.3, thick](.7675,.009)--(.95,.007);

\draw[domain=0.0799:.44,smooth,variable=\x, orange, ultra thick] plot ({\x},{-1.131895*(\x-.44)+.3798});

\draw[ orange, ultra thick](.0327, .95)--(.0813,.784);
\draw[orange, domain=0.44:1,smooth,variable=\x, ultra thick] plot ({\x},{(1-\x)^1.65});

\draw[thick, dashed] (0.1,0)--(.8,.8)node[right]{\footnotesize{$\frac{1-\delta}{\delta}\cdot r$}};

\draw [dotted] (.251,-.02) node[below]{\footnotesize{$\underline r$}}--(.251,.59);
\draw [dotted] (.436,-.02) node[below]{\footnotesize{$\bar r$}}--(.436,.39);

\draw [decorate,decoration={brace,amplitude=10pt}]
 (0.255,0.583) -- (0.255,.185) node [black,midway,xshift=0.6cm]
 {\footnotesize $p$};

 \draw[dotted] (0.1,0.174)node[left]{$c_{G^\varnothing}(\underline r)$}--(.255,.174);
\draw[dotted] (0.1,0.585)node[left]{$c_{G^{LC}} (\underline r)=c_{G^{\pi(\bar r)}} (\underline r)$}--(.255,.585);





\end{tikzpicture}
\captionsetup{justification=justified,singlelinecheck=false}
\caption{The red curve is $c_{G^\varnothing}$, the blue curve is $c_F$, the orange curve is $c_{G^{LC}}$, and the light gray curve is $c_{G^\pi(\bar r)}$. Note that $c_{G^{\pi(\bar r)}}(x')= c_{G^{LC}}(x')$ for all  $x'\in [\underline \theta,\bar r]\cup\{\bar\theta\}$, and $c_{G^{\pi(\bar r)}}(x')<c_{G^{LC}}(x')$ for all $x'\in (\bar r,\bar \theta).$ }
\label{fig:optdist}
\end{center}
\end{figure}

\begin{corollary}
\label{cor:1}
If a contract $\langle p, G\rangle\in\mathbb{R}_+\times\mathcal{G}(F)$ supports a stationary equilibrium, then $G(m)=F(\bar r)$ for all  $m\in [\underline r, \bar r]$.
\end{corollary}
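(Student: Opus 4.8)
The plan is to leverage the characterization already established in \autoref{prop:1}, which tells us that any equilibrium distribution $G$ must satisfy conditions $(iv)$ and $(v)$: namely $G(\underline r)=F(\bar r)$ and $c_G(x)\geq c_{G^*}(x)$ for all $x\in\Theta$ with equality at $x=\underline r$. The goal is to upgrade this to the pointwise statement $G(m)=F(\bar r)$ on the whole interval $[\underline r,\bar r]$, and I would do so by first pinning down $c_G$ on this interval and then differentiating via \autoref{remark:1}$(d)$.

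First I would establish that $c_G$ agrees with $c_{G^*}$ at both endpoints of $[\underline r,\bar r]$. At the left endpoint, $(v)$ of \autoref{prop:1} already gives $c_G(\underline r)=c_{G^*}(\underline r)$. At the right endpoint, I would combine two facts: since $G\in\mathcal{G}(F)$ we have $c_G\leq c_F$ pointwise, while $(v)$ gives $c_G(\bar r)\geq c_{G^*}(\bar r)$; but \autoref{lemma:2}$(i)$ says $c_{G^*}$ is tangent to $c_F$ at $\bar r$, so $c_{G^*}(\bar r)=c_F(\bar r)$. Squeezing yields $c_G(\bar r)=c_{G^*}(\bar r)=c_F(\bar r)$.

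Next I would exploit convexity. By \autoref{lemma:2}$(iv)$, $G^*$ equals the constant $F(\bar r)$ on $[\underline r,\bar r]$, so $c_{G^*}$ is affine there (its right derivative is the constant $F(\bar r)-1$ by \autoref{remark:1}$(d)$). Since $c_{G^*}$ coincides over this interval with the chord of $c_G$ joining the two endpoints where the functions agree, and $c_G$ is convex, $c_G$ must lie weakly below this chord on $[\underline r,\bar r]$; that is, $c_G\leq c_{G^*}$ there. Together with the reverse inequality from $(v)$, this forces $c_G\equiv c_{G^*}$ on $[\underline r,\bar r]$. Differentiating on the open interval, where $c_{G^*}$ is affine, gives $G(m)=G^*(m)=F(\bar r)$ for every $m\in(\underline r,\bar r)$, and $(iv)$ of \autoref{prop:1} handles the left endpoint.

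The one delicate step---the main obstacle---is the right endpoint $m=\bar r$, where a mass point of $G$ could a priori make $G(\bar r)$ exceed its left limit $F(\bar r)$. I would rule this out using the tangency once more: since $c_G\leq c_F$ with equality at the interior point $\bar r$ (recall $\underline\theta<\underline r<\bar r<\bar\theta$), the point $\bar r$ maximizes $c_G-c_F$, so the right derivative of $c_G-c_F$ is nonpositive there; as $F$ is continuous, $\partial_+c_F(\bar r)=F(\bar r)-1$, whence $\partial_+c_G(\bar r)\leq F(\bar r)-1$, i.e.\ $G(\bar r)\leq F(\bar r)$. Monotonicity of $G$ supplies the reverse inequality, completing the argument.
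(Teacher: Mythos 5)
Your proof is correct, and it rests on the same raw ingredients as the paper's---points $(iv)$ and $(v)$ of \autoref{prop:1}, the tangency of $c_{G^*}$ and $c_F$ at $\bar r$ (\autoref{lemma:2}$(i)$), the constancy of $G^*$ on $[\underline r,\bar r]$ (\autoref{lemma:2}$(iv)$), and the formula $\partial_+ c_G=G-1$ (\autoref{remark:1}$(d)$)---but it decomposes the argument differently. The paper works with right derivatives throughout: point $(iv)$ pins down $\partial_+ c_G(\underline r)=F(\bar r)-1$; the sandwich $c_{G^*}\le c_G\le c_F$ at the tangency point pins down $\partial_+ c_G(\bar r)=F(\bar r)-1$; and monotonicity of the right derivative of a convex function then forces $G\equiv F(\bar r)$ on the closed interval in one stroke, endpoint included. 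You work with function values instead: agreement of $c_G$ with $c_{G^*}$ at the two endpoints plus the chord property of convexity yields the identity $c_G\equiv c_{G^*}$ on $[\underline r,\bar r]$, which you then differentiate. Your route delivers a slightly stronger intermediate fact (every equilibrium distribution has the same incremental-benefit function as $G^*$ on the whole interval, not merely the same value at $\underline r$) and a clean geometric picture, but it pays for this at $m=\bar r$: differentiating the identity controls $G$ only on $[\underline r,\bar r)$, so a potential atom of $G$ at $\bar r$ must be ruled out separately. Your fix---$\bar r$ is an interior maximizer of $c_G-c_F$, hence $\partial_+ c_G(\bar r)\le \partial_+ c_F(\bar r)=F(\bar r)-1$, with monotonicity of $G$ giving the reverse bound---is valid, and it is in fact exactly the upper half of the derivative sandwich the paper applies at $\bar r$; so the two proofs converge precisely at the one point where your value-based route needs extra care.
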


Second, the indeterminacy highlighted in point $(v)$ of \autoref{prop:1} potentially changes in the presence of costly information production, i.e., if the principal incurs differential costs for inducing different posterior-mean distributions. In particular, suppose the cost of inducing a distribution $G\in \mathcal{G}(F)$ is strictly increasing in the cardinality of its support, $|\supp(G)|$. In this case, the McCall ``pass/fail" signal is the unique equilibrium signal assuming the cost of providing this signal is no more  than $\bar u-\underline u$; otherwise, no information will be generated in equilibrium. Alternatively, 
if the cost of information is strictly increasing in the Blackwell order, then all distributions that support a stationary equilibrium are either binary distributions or uninformative (i.e., $G^\varnothing$).\footnote{ The cost of information is increasing in the Blackwell order if  whenever signal $(\pi, S)$ is Blackwell more informative than signal $(\pi', S')$, the former costs more than the latter.} Whether the McCall ``pass/fail" signal supports a stationary equilibrium for a given Blackwell monotone cost function depends on the additional trade-off between the marginal increase in surplus from inducing a more efficient search and the marginal information cost of doing so.\footnote{An alternative form of production costs from those described in this paragraph is as follows: Let $\widetilde F\in \mathcal{G}(F)$ represent the most informative posterior-mean distribution the principal can generate. Formally, any $G\in \mathcal{G}(\widetilde F)$ can be produced by the principal at zero cost, while producing any $G\notin \mathcal{G}(\widetilde F)$ is prohibitively costly. \autoref{prop:1} continues to hold after replacing $F$ with $\widetilde F$ and $\bar u=u(F)$ with $u(\tilde F)$ throughout the statement of the result.}

Third, since the agent's continuation and reservation values depend on $\delta$,  the discount factor affects how long the agent searches in an equilibrium, as described below:

\begin{corollary}\label{cor_delta_comp}
    The expected duration of search in any stationary equilibrium is non-decreasing in  $\delta.$ Additionally, the expected duration of search goes to infinity as $\delta$ goes to one.
\end{corollary}

\section{Public Signals}\label{public}
Thus far, we have assumed that the agent is  solely dependent on the principal to learn about a good's quality. However, there are settings in which public information is available about the sampled good. For example, a firm searching to fill a job vacancy may encounter candidates with different levels of education (or work history or criminal background), and those with different levels of education may have different distributions of ability.

In this section, we consider an agent who learns about a good's quality both by observing a signal $(\xi, Z)$, where $Z$ is a compact set of signal realizations and $\xi:\Theta\to \Delta (Z)$, and by contracting with the principal. We refer to $(\xi, Z)$ as a public signal because its realizations are observed by both the principal and the agent for free. Without loss of generality, we assume that $\xi$ has full support on $Z$. Additionally, for ease of exposition, we assume that the posterior beliefs conditional on observing any $z\in Z$ are atomless. 

We amend the timing of events within each period $t\geq 1$ as follows: First, the agent samples a good of quality $\theta_t$ which is unobserved by  the agent and the principal. However, both observe a public signal realization $z_t\in Z$ which is distributed according to $\xi(\theta_t)$. The principal then offers a  contract $\langle p_t, (\pi_t, S_t)\rangle$ which the agent either accepts or rejects. If the agent accepts the contract $(a_t=1)$, he pays $p_t$ and observes an additional signal realization $s_t\in S$ distributed according to $\pi_t(\theta_t)$. If the agent instead rejects the contract $(a_t=0)$, he pays nothing and observes no additional signal. Finally, the agent decides whether to consume the sampled good and stop searching $(d_t=1)$ or to continue searching $(d_t=0)$. \autoref{fig:timing2} depicts the amended timing of events. 

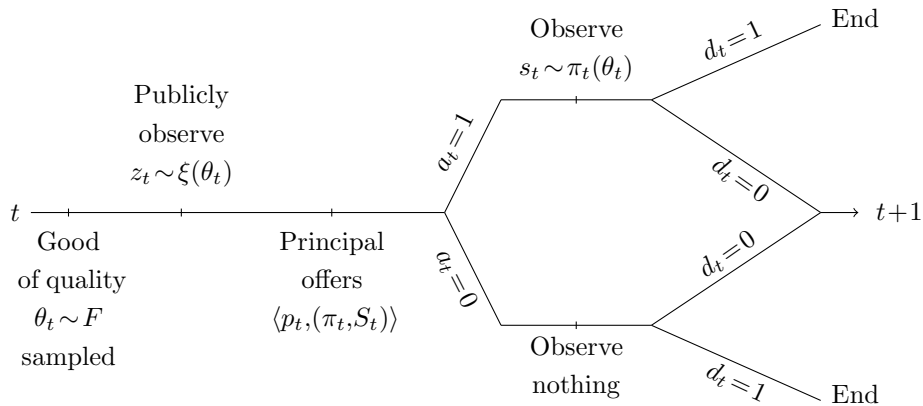
\begin{figure}[ht]
\centering 
\begin{tikzpicture}[scale=0.5]
\tikzset{
    position label/.style={
             text depth = 1ex
    }
}
    \foreach \x in {-8, -5, -1}
      \draw (\x cm,3pt) -- (\x cm,-3pt);
\draw[-] (-9,0) node[left]{\footnotesize{ $t$}}--(2,0);
\draw[-] (2,0)--(3.5,3);
\draw[-] (2,0)--(3.5,-3);
\draw[-] (3.5,-3)--(7.5,-3);
\draw[-] (3.5,3)--(7.5,3);
\draw[-] (7.5,-3)--(12,0);
\draw[-] (7.5,-3)--(12,-5);
\draw[-] (7.5,3)--(12,5);
\draw[-] (7.5,3)--(12,0);
\draw[->] (12,0)--(13,0)node[right]{\footnotesize{ ${t+1}$}};

\node [position label, align=center, below=3pt] (fend) at (-8,0) {\footnotesize{Good} \\ \footnotesize{ of quality}\\ \footnotesize{$\theta_t\sim F$}\\ \footnotesize{sampled}};
\node [position label, align=center, above=3pt] (fend) at (-5,0) {\footnotesize{Publicly} \\  \footnotesize{observe}\\ \footnotesize{$z_t\sim \xi(\theta_t)$}};
\node [position label, align=center, below=3pt] (fend) at (-1,0) {\footnotesize{Principal} \\ \footnotesize{offers}\\ \footnotesize{ $\langle p_t, (\pi_t, S_t)\rangle$}};
\node [position label, rotate=63, align=center, above] (fend) at (3,1.5) {\footnotesize{$a_t=1$}};
\node [position label, rotate=297, align=center, below] (fend) at (2.8,-1.5) {\footnotesize{$a_t=0$}};

\draw[](5.5,3.1)--(5.5,2.9);
\draw[](5.5,-3.1)--(5.5,-2.9);

\node [position label, align=center, above] (fend) at (5.5,3) {\footnotesize{Observe}\\ \footnotesize{$s_t\sim\pi_t(\theta_t)$}};
\node [position label, align=center, below] (fend) at (5.5,-3) {\footnotesize{Observe}\\ \footnotesize{nothing}};

\node [position label, align=center, right] (fend) at (12,5) {\footnotesize{End}};
\node [position label, align=center, right] (fend) at (12,-5) {\footnotesize{End}};

\node [position label, rotate=335, align=center, below] (fend) at (10,-4) {\footnotesize{$d_t=1$}};
\node [position label, rotate=35, align=center, above] (fend) at (10,-1.8) {\footnotesize{$d_t=0$}};

\node [position label, rotate=25, align=center, above] (fend) at (10,3.8) {\footnotesize{$d_t=1$}};
\node [position label, rotate=325, align=center, below] (fend) at (10.2,1.35) {\footnotesize{$d_t=0$}};
  \end{tikzpicture}
  \caption{Timing of events with a public signal}
  \label{fig:timing2}
\end{figure} 

Consider first the case of autarky in which the agent can only observe a realization of the public signal $(\xi, Z)$. His induced posterior-mean distribution is $G^\xi\in\mathcal{G}(F)$ and his expected payoff in the search market is $u(G^\xi)$, which is the solution to \eqref{eq:1'} evaluated at $G^\xi$. Note that $u(G^\xi)\geq \underline u$, i.e., the agent's autarky payoff improves in the current setting since he has access to a free signal. Therefore, the full surplus that can be generated by contracting with the principal is now $\bar u - u(G^\xi).$

Next, consider the case in which the agent can observe the public signal followed by a second signal. Given a public signal realization $z\in Z$, the agent updates his belief from his prior $F$ to an interim belief $F_{z}$. When the agent further updates his belief based on the second signal, $F_z$ serves as his new ``prior."\footnote{Mathematically, it is therefore equivalent to view this model as one in which the distribution of good quality is itself drawn from a distribution of priors with support $\{F_z\}_{z\in Z}$ in each period according to a distribution $\xi$.} Thus, the set of posterior-mean distributions that can be induced through a second signal is the set of mean-preserving contractions of $F_z$, i.e., $\mathcal{G}(F_z)$. For example, a fully informative second signal induces $F_z$ itself, whereas an uninformative  second signal induces the degenerate distribution
\[
G^\varnothing_z(m)=\left\{\begin{array}{ccl}
     0 &\mbox{if} & m<\mathbb{E}_{F_z}[\theta]  \\
     1 &\mbox{if} & m\geq \mathbb{E}_{F_z}[\theta]  
     \end{array}\right..
\]

Therefore, following a public signal realization $z\in Z$, the principal can propose any contract of the form $\langle p, G\rangle\in\mathbb{R}_+\times \mathcal{G}(F_z)$. Our goal is to characterize stationary equilibrium outcomes. However, we must first update the conditions  for a stationary equilibrium given in \autoref{def:1} to account for the presence of a public signal.  

\begin{definition}
    \label{def:2}
    \normalfont Given a public signal $(\xi, Z)$, a stationary equilibrium  is defined by a pair of  continuation values $(U, V)\in \mathbb{R}_+^2$ and a family of  contracts $\{\langle p_z,G_z\rangle\}_{z\in Z}$ with $\langle p_z, G_z\rangle\in\mathbb{R}_+\times \mathcal{G}(F_z)$ for each $z\in Z$ such that in each period and after every history:
    \begin{enumerate}
    \item The agent stops searching and consumes a good of expected quality $m\in \Theta$ if and only if $m>\delta U$, i.e.,  \eqref{eq:os} holds,  

    \item Given a public signal realization $z\in Z$, the agent accepts  $\langle \widehat p,\widehat G\rangle\in\mathbb{R}_+\times\mathcal{G}(F_z)$  if and only if
    \begin{align*}
          \label{eq:pc2}
    \tag{PC$_z$}
           c_{\widehat G}(\delta U)-c_{G_z^\varnothing}(\delta U) \geq \widehat p,
    \end{align*}
       
\item Given a public signal realization $z\in Z$, the principal proposes $\langle p_z,G_z\rangle$ to maximize her profit, i.e.,
\[
\label{eq:pm2}
\tag{PM$_z$}
 \langle p_z,G_z\rangle\hspace*{.2em}\in \argmax_{\langle \widehat p, \widehat G\rangle\in\mathbb{R}_+\times \mathcal{G}(F_z)}\hspace*{.2em} \widehat p+ \widehat G(\delta U) \delta V \hspace*{.5em} \text{ subject to \eqref{eq:pc2}},
\]   
\item Payoffs are self-generating, i.e.,
\[
\label{eq:sga2}
\tag{SG-A$^\xi$}
U=\int_Z\Big(\delta U+c_{G_z}(\delta U)-p_z\Big) \xi(dz)
\] 
    and
    \[
    \label{eq:sgp2}
\tag{SG-P$^\xi$}
V= \int_Z\Big(p_z + G_z(\delta U) \delta V\Big)\xi(dz).
\] 
\end{enumerate}
\end{definition} 

Compared to \autoref{def:1}, the sequential rationality conditions in \autoref{def:2} must hold following every public signal realization $z\in Z$, which is captured by the $z$-dependent participation constraint \eqref{eq:pc2} and profit maximization problem \eqref{eq:pm2}. In contrast, the self-generation conditions \eqref{eq:sga2} and \eqref{eq:sgp2} hold only in expectation.

In order to state our main result of this section, we must first introduce some notation. Let $r^\xi\triangleq r(G^\xi)$ be the agent's reservation value in autarky when his only source of information is the pubic signal. Given a public signal realization $z\in Z$, we denote the lower and upper bounds on the support of $F_z$ by $\underline \theta_z\triangleq\inf \hspace*{.2em} \supp(F_z)$ and $\bar \theta_z\triangleq\sup \hspace*{.3em}\supp(F_z)$ respectively. For  $k\geq 0$, let
\[
Z_A(k)\triangleq\{z\in Z:r^\xi \hspace*{.1em} < \hspace*{.1em}\underline \theta_z<\bar r-k\}.
\]
For each $z\in Z$ such that $r^\xi\in [\underline\theta_z, \bar\theta_z]$, let 
$\bar x_z\triangleq\sup\{x\in [\underline\theta_z, \bar\theta_z]:\mathbb{E}_{F_z}[\theta|\theta\leq x]\leq r^\xi\}$,\footnote{The mapping $x\mapsto \mathbb{E}_{F_z}[\theta|\theta\leq x]$ is  increasing, and (as $F_z$ is assumed to be atomless for all $z\in Z$) $\mathbb{E}_{F_z}[\theta|\theta\leq \underline\theta_z]=\underline \theta_z\leq r^\xi$. Thus, $\bar x_z$ is well-defined.} and for $k\geq 0$, let 
\[
Z_B(k)\triangleq\big\{z\in Z:r^\xi\in [\underline \theta_z ,\bar\theta_z]\hspace*{.1em} \wedge \hspace*{.1em}\bar x_z\hspace*{.1em} < \min\{\bar \theta_z, \hspace*{.1em}\bar r-k\}\big\}.
\]
Note that $Z_A(k)$ and $Z_B(k)$ could be empty. Finally, we define  the function $\Phi:\mathbb{R}_+\to \mathbb{R}$ by
       \small{\begin{align*}
        \label{eq:5}
        \tag{4}
\Phi(k)\triangleq \int_{\bar r-k}^{\bar r}F(m)dm+\int_{Z_A(k)}\int_{\underline\theta_z}^{\bar r-k}F_z(m)dm \xi(dz)+\int_{Z_B(k)}\int_{\bar x_z}^{\bar r -k}\big(F_z(m)-F_z(\bar x_z)\big)dm\xi(dz). 
        \end{align*}} 
        
       \normalsize 
\noindent We provide a detailed discussion of  $Z_A(\cdot)$, $Z_B(\cdot)$, and $\Phi(\cdot)$ following the statement of the following theorem.

\begin{theorem}
\label{prop:2}
    For any public signal $(\xi, Z)$, a stationary equilibrium exists. Furthermore, in any stationary equilibrium, 
    \begin{enumerate}[$(i)$]
        \item $U=u(G^\xi)$ and
        \item $V=\bar u-u(G^\xi)-\displaystyle\frac{k^*}{\delta}$, where $k^*\in [0, \bar r-r^\xi]$ is the unique solution to the fixed point 
        \[
        k=\delta\Phi(k).
        \]
    \end{enumerate}
\end{theorem}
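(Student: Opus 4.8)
The plan is to mirror the logic of \autoref{prop:1}, replacing the autarky distribution $G^\varnothing$ by the public-signal distribution $G^\xi$ and carefully tracking the realization-dependent constraints the public signal imposes; throughout write $r^\xi\triangleq\delta u(G^\xi)$ for the agent's stopping threshold. First I would pin down $U$. Because the objective in \eqref{eq:pm2} is increasing in $\widehat p$, the constraint \eqref{eq:pc2} binds for every $z$, so $p_z=c_{G_z}(\delta U)-c_{G^\varnothing_z}(\delta U)$. Substituting into \eqref{eq:sga2} and using the identity $\mathbb{E}_\xi\big[c_{G^\varnothing_z}(\delta U)\big]=c_{G^\xi}(\delta U)$ — which holds because $c_{G^\varnothing_z}(\delta U)=\max\{\mathbb{E}_{F^\xi_z}[\theta]-\delta U,0\}$ and $G^\xi$ is exactly the distribution of the interim means $\mathbb{E}_{F^\xi_z}[\theta]$ — reduces \eqref{eq:sga2} to $U=\delta U+c_{G^\xi}(\delta U)$, whose unique solution is $U=u(G^\xi)$. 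This gives point $(i)$ and fixes the agent's threshold at $r^\xi$.

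Next I would solve the principal's problem. With \eqref{eq:pc2} binding, \eqref{eq:pm2} separates across $z$ into $\max_{\widehat G\in\mathcal{G}(F^\xi_z)} c_{\widehat G}(r^\xi)+\delta V\,\widehat G(r^\xi)$, an affine functional of $\widehat G$ whose solution, exactly as the binary reduction behind \autoref{lemma:2}, is a lower-censorship (pass/fail) structure pooling all qualities below some threshold $t_z$. Writing the candidate $c$-function as the tangent to $c_{F^\xi_z}$ at $t_z$ and differentiating the objective in $t_z$ yields the common interior optimum $t_z=r^\xi+\delta V$; setting $k\triangleq\bar r-r^\xi-\delta V$ (equivalently $V=\bar u-u(G^\xi)-k/\delta$, as claimed) this target is exactly $\bar r-k$. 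The interior optimum is feasible only when pooling up to $\bar r-k$ still leaves the agent willing to continue after a ``fail,'' i.e. $\mathbb{E}_{F^\xi_z}[\theta\mid\theta\le\bar r-k]\le r^\xi$, which is precisely $\bar x_z\ge\bar r-k$. It fails in two ways: on $Z_A$ (where $r^\xi<\underline\theta_z$) the agent stops after every realization and no information can be sold, so $z$ contributes zero revenue and zero continuation; on $Z_B(k)$ (where $\bar x_z<\bar r-k$) the principal can only pool up to $\bar x_z$, earning the constrained optimum.

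The crux is then to substitute these realization-wise optimal contracts into \eqref{eq:sgp2} and reduce it to a single scalar equation in $k$. Aggregating $\mathbb{E}_\xi[p_z+\delta V\,G_z(r^\xi)]$, subtracting the ``ideal'' value $\bar u-u(G^\xi)$ that would obtain if the threshold $\bar r$ were implementable at every $z$, and collecting the shortfall from lowering the threshold globally to $\bar r-k$ (the term $-\int_{\bar r-k}^{\bar r}1-F(m)\,dm$) together with the corrections from $Z_A$ and $Z_B(k)$ (the remaining two integrals) produces exactly $k=\tfrac{\delta}{1-\delta}\Phi(k)$ with $\Phi$ as in \eqref{eq:5}. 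This bookkeeping — correctly accounting for each region's contribution to both price and continuation, and showing the residual telescopes into $\Phi$ via the censorship geometry on each of the three regions — is the main obstacle.

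Finally I would establish uniqueness and existence. I would show $\Phi$ is continuous and weakly decreasing in $k$ (each of its three pieces shrinks as $\bar r-k$ falls, and a $z$ leaving $Z_B(k)$ does so with a vanishing integrand, so there is no jump), with $\Phi(0)\ge0$ and $\Phi(\bar r-r^\xi)=-\int_{r^\xi}^{\bar r}1-F(m)\,dm<0$. Hence $k\mapsto k-\tfrac{\delta}{1-\delta}\Phi(k)$ is strictly increasing from a nonpositive value at $k=0$ to a positive value at $k=\bar r-r^\xi$, giving a unique root $k^*\in[0,\bar r-r^\xi]$ and thus point $(ii)$. Existence follows by building the equilibrium from $k^*$: set $U=u(G^\xi)$, $\delta V=\bar r-r^\xi-k^*$, take $G_z$ to be the censorship structure with threshold $\bar r-k^*$ (capped at $\bar x_z$ on $Z_B(k^*)$, irrelevant on $Z_A$) and $p_z$ from binding \eqref{eq:pc2}, then verify \eqref{eq:os}, \eqref{eq:pc2}, \eqref{eq:pm2}, \eqref{eq:sga2}, and \eqref{eq:sgp2} directly, the last holding by the fixed-point property of $k^*$. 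As a sanity check, a degenerate public signal gives $Z_A=Z_B(0)=\varnothing$, $\Phi\equiv0$ near $0$, $k^*=0$, and recovers \autoref{prop:1}.
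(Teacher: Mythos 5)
Your proposal is correct and follows essentially the same route as the paper's proof: binding \eqref{eq:pc2} and aggregating to get $U=u(G^\xi)$, reducing the principal's problem at each $z$ to a pass/fail cutoff with unconstrained optimum $r^\xi+\delta V$ capped at $\bar x_z$, changing variables $k=\bar r-r^\xi-\delta V$ to obtain the fixed point $k=\tfrac{\delta}{1-\delta}\Phi(k)$, and using monotonicity and continuity of $\Phi$ for existence and uniqueness of $k^*$. The only detail to add is the ``too pessimistic'' realizations with $r^\xi>\bar\theta_z$ (the paper's separate Case 2), where the agent never stops and the principal earns exactly $\delta V$ with zero price; your telescoping into $\Phi$ goes through only after verifying that these realizations contribute zero correction, which the paper checks explicitly.
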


\autoref{prop:2} establishes that for any given public signal, a stationary equilibrium exists and that all stationary equilibria are payoff equivalent. The agent gets his autarky payoff $u(G^\xi)$, which is now (potentially strictly) higher than $\underline u$, whereas the principal's payoff may be lower than the maximal surplus that she could feasibly extract $\bar u-u(G^\xi)$. In other words, introducing a free and public signal may lead to an inefficient outcome with $k^*/\delta$ measuring the total welfare loss in equilibrium.\footnote{The proof for \autoref{prop:2} establishes a ``full characterization'' similar to that in \autoref{prop:1}  by specifying for each public signal realization: the unique equilibrium price, the unique equilibrium stopping probability,  and  the set of equilibrium signals.}

To understand when the public signal causes an efficiency loss, suppose for some $k\in [0, \bar r-r^\xi]$, the principal seeks to induce a reservation value of $x_z=\bar r - k$ for all  $z\in Z$ by offering the pass/fail signal $\pi(\bar r-k)$. The loss of surplus stemming from the difference between the induced reservation value $\bar r-k$ and the McCall reservation value $\bar r$ is characterized by $\Phi(k).$  To see why, notice that the decrease in surplus from inducing $\bar r-k$ instead of $\bar r $ given a public signal realization $z\in Z$ is captured by 
\[
\mathbb{E}_{F_z}[\max\{\theta, \bar r\}]-\mathbb{E}_{F_z}[\max\{\theta, \bar r-k\}]=\int^{\bar r}_{\bar r-k}F_z(m)dm. 
\]
 which in expectation yields the first term in \eqref{eq:5} after noting that $\mathbb{E}_{\xi}[F_z]=F$ by Bayes-plausibility. 

However, the first term in \eqref{eq:5} captures the change in the total surplus assuming the principal can induce $\bar r-k$ for all realizations of the public signal. Yet, there are two cases when the principal cannot induce the desired reservation value following a public signal realization $z\in Z$. First, suppose the agent observes a public signal realization $z\in Z$ such that $r^\xi<\underline \theta_z$. In this case, no additional information can affect the agent's search as he already observes conclusive evidence that the quality of the sampled good is above his autarky reservation value, leading him to terminate his search. However, when $\underline\theta_z<\bar r-k$, the principal would like the agent to continue searching when the sampled good's quality lies in the interval $[\underline\theta_z, \bar r-k]$. This is precisely the case when $z\in Z_A(k)$, and the additional surplus loss is captured by 
 \[
\int_{\underline \theta_z}^{\bar r-k}\big(\bar r-k-m\big)dF_z(m)=\int_{\underline \theta_z}^{\bar r-k}F_z(m)dm,
 \]
 which in expectation yields the second term in \eqref{eq:5}.

Second, suppose the agent observes a public signal realization $z\in Z$ such that $r^\xi\in [\underline \theta_z ,\bar\theta_z]$. If $\bar x_z<\bar\theta_z$ (which is possible only if $\mathbb{E}_{F_z}[\theta]>r^\xi$), then the agent stops his search without additional information. Nonetheless, the agent is persuaded to  continue his search if he observes a ``fail" from a pass/fail signal $\pi(x)$ as long as $x\leq \bar x_z$. Thus, if $\bar x_z<\min\{\bar \theta_z, \bar r-k\}$, the pass/fail signal $\pi(\bar r-k)$ cannot persuade the agent to continue his search even when the agent observes a ``fail" signal realization. This is precisely the case when $z\in Z_B(k)$. Intuitively, a public signal realization $z\in Z_B(k)$ induces an interim posterior belief $F_z$ that places more mass on the interval $[r^\xi, \bar r-k]$ than it does on $[\underline \theta_z,r^\xi]$, as shown in \autoref{fig:ICbinds}. Thus, even though the agent lacks conclusive evidence, he is more confident than not that the sampled good's quality is high enough for him to stop his search, making it difficult for the principal to convince him otherwise.

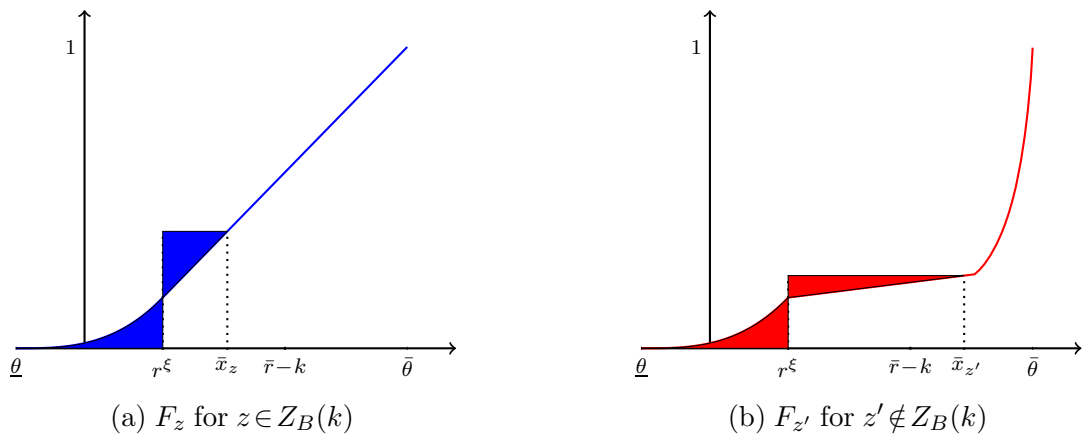
\begin{figure}[ht]
\begin{center}
\begin{subfigure}[t]{0.4\textwidth}
\begin{tikzpicture}[xscale=1.3, yscale=1]\scriptsize
\draw[->, thick] (0,0) -- (4.5,0);
\draw[->, thick] (0.7,0) -- (0.7,4.5) ;
\draw [] (4,-.02)node[below]{$\bar\theta$}--(4,.02);
\draw [] (0,-.02)node[below]{$\underline\theta$}--(0,.02);
\draw [dotted, thick] (1.5,-.02)node[below]{$r^\xi$}--(1.5,1.556);

\draw [dotted, thick] (2.16,-.02)node[below]{$\bar x_z$}--(2.16,1.556);

\draw [thick] (2.75,-.02)node[below]{$\bar r-k$}--(2.75,.02);

\draw (0.7,4)node[anchor=east]{1};
\draw[blue, domain=0:1.5,smooth,variable=\x, thick] plot ({\x},{.2*(\x)^3});

\draw[blue, domain=1.497:4,smooth,variable=\x, thick] plot ({\x},{1.3333*(\x)-1.324});

\draw[fill=blue,opacity=.2, domain=0:1.5,smooth,variable=\x] plot ({\x},{.2*(\x)^3})--(1.5,0)--(0,0);

\draw[fill=blue,opacity=.2, domain=1.497:2.16,smooth,variable=\x] plot ({\x},{1.3333*(\x)-1.324})--(1.5,1.556)--(1.5,.675);

\end{tikzpicture}	
\caption{$F_z$ for $z\in Z_B(k)$}
\label{fig:IC1}
\end{subfigure}
\hspace{20mm}
\begin{subfigure}[t]{0.4\textwidth}
\begin{tikzpicture}[xscale=1.3, yscale=1]\scriptsize
\draw[->, thick] (0,0) -- (4.5,0);
\draw[->, thick] (0.7,0) -- (0.7,4.5) ;
\draw [] (4,-.02)node[below]{$\bar\theta$}--(4,.02);
\draw [] (0,-.02)node[below]{$\underline\theta$}--(0,.02);
\draw [dotted, thick] (1.5,-.02)node[below]{$r^\xi$}--(1.5,.96755);

\draw [dotted, thick] (3.3,-.02)node[below]{~$\bar x_{z'}$}--(3.3,.96755);

\draw [thick] (2.75,-.02)node[below]{$\bar r-k$}--(2.75,.02);

\draw (0.7,4)node[anchor=east]{1};
\draw[red, domain=0:1.5,smooth,variable=\x, thick] plot ({\x},{.2*(\x)^3});

\draw[red, domain=1.497:3.4,smooth,variable=\x, thick] plot ({\x},{.1625*(\x)+.4313});

\draw[red,thick,-] plot [smooth, tension=1] coordinates{(3.4,.98) (3.8, 2) (4, 4)};

\draw[fill=red,opacity=.2, domain=0:1.5,smooth,variable=\x] plot ({\x},{.2*(\x)^3})--(1.5,0)--(0,0);

\draw[fill=red,opacity=.2, domain=1.497:3.3,smooth,variable=\x] plot ({\x},{.1625*(\x)+.4313})--(1.5,.96755)--(1.5,.675);

\end{tikzpicture}	
\caption{$F_{z'}$ for $z'\notin Z_B(k)$}
\label{fig:ICb}
\end{subfigure}
\captionsetup{singlelinecheck=false, justification=justified}
\caption{The two panels show the agent's interim posterior beliefs $F_z$ (in blue) and $F_{z'}$ (in red) after observing public signal realizations $z$ and $z'$, respectively. For ease of comparison, both $F_z$ and $F_{z'}$ have full support on $\Theta$ and $F_z(\theta)=F_{z'}(\theta)$ for all $\theta\leq r^\xi$. The point $\bar x_z$ is determined by the unique value such that the shaded area to the left of $r^\xi$  equals the shaded area to its right, i.e., $\int^{r^\xi}_{\underline \theta}F_z(\theta)d\theta=\int_{r^\xi}^{\bar x_z}[F_z(\bar x_z)-F_z(\theta)]d\theta$, and $\bar x_{z'}$ is defined~analogously.~As~$\bar x_z<\bar r-k<\bar x_{z'}$,~we~conclude that $z\in Z_B(k)$ and $z'\notin Z_B(k)$. }
\label{fig:ICbinds}
\end{center}
\end{figure}

When $z\in Z_B(k)$, the principal must make a concession by offering a pass/fail signal $\pi(x_z)$ with  $x_z\in[r^\xi, \bar x_z]$ in order to persuade the agent. For any two cutoffs  $x_z'<x_z''$ in $[r^\xi, \bar x_z]$, the principal can charge more for $\pi(x_z')$ (as $x_z'$ is closer to the agent's actual reservation value of $r^\xi$) whereas she is more likely to keep the agent in the market by offering $\pi(x_z'')$ (as more goods are failed). Therefore, there is a familiar intra-temporal trade-off between surplus extraction (a higher per-period price) and persuasion (a higher probability of keeping the agent searching). Our proof of \autoref{prop:2} shows that the principal fully prioritizes persuasion when $z\in Z_B(k)$; she offers the pass/fail signal $\pi(\bar x_z)$, which makes the agent indifferent between stopping and continuing his search conditional on seeing a ``fail." As a result, the highest price the principal can charge to induce the agent to buy this signal is zero. The additional surplus loss in this case is captured by 
  \[
\int_{\bar x_z}^{\bar r-k}\big(\bar r-k-m\big)dF_z(m)=\int_{\bar x_z}^{\bar r-k}\big(F_z(m)-F_z(\bar x_z)\big)dm,
 \]
 which in expectation yields the last term in \eqref{eq:5}.

The principal is able to induce the McCall reservation value in a stationary equilibrium following any public signal realization if and only if $Z_A(k)\cup Z_B(k)$ has $\xi$-measure zero for all $k\in [0, \bar r-r^\xi]$. To see this, note that by construction, both $Z_A(k)$ and $Z_B(k)$ shrink (decrease in the inclusion order) as $k$ increases. Therefore, if $\xi\left(Z_A(k)\cup Z_B(k)\right)>0$ for some $k\in [0, \bar r-r^\xi]$, then $\xi\left(Z_A(k')\cup Z_B(k')\right)>0$ for all $k'\in[0,k]$. Thus, inefficiencies arise in a stationary equilibrium if and only if $\xi\left(Z_A(0)\cup Z_B(0)\right)>0$, and the efficiency loss is quantified by the unique solution to the fixed point $k=\delta\Phi(k)$.
 
Before we conclude this section, we make three points. First, when $F_z$ has full support over $\Theta$ for all realizations $z\in Z,$ the set $Z_A(0)$ is trivially empty. The following result provides a sufficient and necessary condition for $Z_B(0)$ to also be empty in this case.

\begin{corollary}\label{cor:2}
Consider a public signal $(\xi, Z)$ such that $F_z$ has full support on $\Theta$ for almost all $z\in Z$. The principal extracts the full surplus in a stationary equilibrium if and only if $\mathbb{E}_{F_z}[\theta|\theta\leq \bar r]\leq r^\xi$ for almost all $z\in Z$.
\end{corollary}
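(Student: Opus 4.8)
The plan is to route everything through \autoref{prop:2}. The principal extracts the socially efficient surplus precisely when her payoff attains its feasible maximum $V=\bar u-u(G^\xi)$, which by point $(ii)$ of \autoref{prop:2} is equivalent to $k^*=0$. Since $k^*$ is the \emph{unique} solution of $k=\frac{\delta}{1-\delta}\Phi(k)$, and the right-hand side vanishes at $k=0$ if and only if $\Phi(0)=0$, uniqueness of the fixed point yields $k^*=0 \Leftrightarrow \Phi(0)=0$. Thus the corollary reduces to showing that, under the full-support hypothesis, $\Phi(0)=0$ if and only if $\mathbb{E}_{F^\xi_z}[\theta\mid\theta\le\bar r]\le r^\xi$ for almost all $z$.

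Next I would simplify $\Phi(0)$ using full support. Because $F^\xi_z$ has full support on $\Theta$ for almost all $z$, we have $\underline\theta_z=\underline\theta$ and $\bar\theta_z=\bar\theta$ for almost all $z$. Since $r^\xi=\delta u(G^\xi)$ and $\underline u\le u(G^\xi)\le \bar u$ give $r^\xi\in[\underline r,\bar r]$, while \Cref{ass:1} gives $\underline\theta<\underline r$, it follows that $\underline\theta_z=\underline\theta<r^\xi$ for almost all $z$; hence $Z_A$ is null and the middle integral of \eqref{eq:5} drops out. The first term of \eqref{eq:5} evaluated at $k=0$ is an integral over the empty range $[\bar r,\bar r]$ and is therefore zero. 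Consequently $\Phi(0)=\int_{Z_B(0)}\int_{\bar x_z}^{\bar r}\big(F^\xi_z(m)-F^\xi_z(\bar x_z)\big)\,dm\,\xi(dz)$.

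I would then show this remaining integral is zero iff $Z_B(0)$ is null. For almost every $z$ the integrand $F^\xi_z(m)-F^\xi_z(\bar x_z)$ is nonnegative on $[\bar x_z,\bar r]$, so $\Phi(0)\ge 0$. For $z\in Z_B(0)$ we have $\bar x_z<\bar r$, and full support together with atomlessness makes $F^\xi_z$ strictly increasing, so $F^\xi_z(m)>F^\xi_z(\bar x_z)$ for every $m\in(\bar x_z,\bar r]$ and the inner integral is strictly positive. Therefore $\Phi(0)=0$ if and only if $\xi(Z_B(0))=0$, i.e. $Z_B(0)=\emptyset$ up to a null set.

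Finally I would translate $Z_B(0)=\emptyset$ into the stated inequality. With full support, $r^\xi\in[\underline\theta_z,\bar\theta_z]$ for almost all $z$, so membership in $Z_B(0)$ reduces to the condition $\bar x_z<\bar r$. The map $x\mapsto \mathbb{E}_{F^\xi_z}[\theta\mid\theta\le x]$ is continuous and strictly increasing on $[\underline\theta,\bar\theta]$ (again by full support and atomlessness) and takes the value $\underline\theta<r^\xi$ at $x=\underline\theta$; hence $\bar x_z=\sup\{x:\mathbb{E}_{F^\xi_z}[\theta\mid\theta\le x]\le r^\xi\}$ satisfies $\bar x_z\ge\bar r$ if and only if $\mathbb{E}_{F^\xi_z}[\theta\mid\theta\le\bar r]\le r^\xi$. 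Chaining the equivalences back gives the result. The main obstacle is the careful bookkeeping of the almost-everywhere qualifiers and checking that the full-support and atomlessness hypotheses simultaneously deliver both the strict positivity of the $Z_B(0)$ integrand and the strict monotonicity needed to invert the conditional-mean map at the single threshold $\bar r$; everything else is read off directly from \eqref{eq:5} and \autoref{prop:2}.
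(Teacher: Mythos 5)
Your proof is correct and follows essentially the same route the paper intends: reduce full extraction to $k^*=0$ via \autoref{prop:2}, note that full support makes $Z_A$ null so that $\Phi(0)=0$ reduces to emptiness (up to null sets) of $Z_B(0)$, and then translate $\bar x_z\ge\bar r$ into the conditional-mean inequality at $\bar r$. The bookkeeping of the almost-everywhere qualifiers and the use of full support to get strict positivity of the $Z_B(0)$ integrand are exactly the points the paper's surrounding discussion relies on, so there is nothing to add.
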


Intuitively, the requirement that $\mathbb{E}_{F_z}[\theta|\theta\leq \bar r]\leq r^\xi$ for almost all $z\in Z$ is equivalent to insisting that the interim posterior belief induced by almost each public signal realization has a relatively ``fat left tail," that is, the posterior belief places more mass below $r^\xi$ than it does on $[r^\xi, \bar r]$. This condition is trivially satisfied when the public signal is uninformative as $\mathbb{E}_F[\theta|\theta\leq \bar r]\leq \underline r\leq r(G)$ for any $G\in\mathcal{G}(F)$ and any absolutely continuous prior $F$. Hence, our result in \autoref{prop:1} can be seen as a special case of \autoref{prop:2}.

Second, the presence of a public signal not only improves the agent's autarky payoff but it also shortens the expected duration of his search in equilibrium, as stated below:\footnote{While the presence of public signal weakly shortens the agent's expected duration of search when compared to the case without a public signal, it is not true that a (Blackwell) more informative public signal leads to a shorter expected search duration than a less informative public signal. To see why, notice that a perfectly informative public signal and a perfectly uninformative public signal both lead to the same expected search duration as in our baseline model, while \autoref{cor_info_comp} states that a partially informative pubic signal that yields an inefficient stationary equilibrium outcome must lead to a strictly shorter duration of search.}

\begin{corollary}\label{cor_info_comp}
  The expected duration of search in any stationary equilibrium is weakly shorter in the presence of a public signal, compared to without a public signal. Furthermore, the expected duration of search is strictly shorter in the presence of a public signal if and only if $k^*>0$, where $k^*$ is as defined in \autoref{prop:2}.
\end{corollary}

Finally,  in some markets the agent may also observe a private signal about the quality of each good before deciding to contract with the principal. The agent's willingness-to-pay for an additional signal would then be his private information, introducing a screening problem \citep[as in][]{ber18} into each stage game between the principal and the agent. Such an analysis would require a different set of tools; for example, truth-telling constraints for a privately-informed agent would be more tractable if formulated via recommendation mechanisms \citep[as in][]{kol17} rather than via distributions of posterior means as in our analysis. We therefore view the study of private signals in search markets with endogenously-supplied information as a fruitful avenue for future research.

\section{Conclusion}\label{conclude}

In this paper, we introduce a persuaded search game in which a profit-maximizing principal with limited commitment controls the supply and pricing of information while an agent decides whether to acquire information from the principal and when to stop searching. By decentralizing information and decision-making in an otherwise workhorse model of sequential search without recall, our model highlights intra-temporal trade-offs between surplus extraction and persuasion as well as how these trade-offs are inter-temporally linked due to the dynamic considerations for both the principal and agent.

When the principal is the sole source of information, we show that the essentially unique stationary equilibrium  is efficient and features full surplus extraction by the principal. This also implies that the principal does not benefit from offering more complex contracts that condition prices on the signal realizations and agent's decision to stop or continue searching, considering non-stationary strategies, or having long-term commitment power.

In contrast, when there are public and free sources of additional information, we show that the essentially unique stationary equilibrium  could be inefficient, but only if the public information induces interim beliefs that are either too ``optimistic" or have a sufficiently ``thin left tail." Perhaps counterintuitively, this implies that while a free and public source of information never lowers the agent's ex-ante payoff, the agent may nevertheless end up with an ex-post lower-quality match. This finding matches the empirical results of \cite{hof18} who show that firms which use pre-employment tests along with additional sources of (public) information such as resume screening  end up hiring less productive employees than firms which rely only on pre-employment tests. While they attribute this result to human error or bias among hiring managers, our model predicts this as an equilibrium outcome of a contracting game between rational players.

The parsimonious and tractable nature of our model is useful to derive additional results under different settings. In the Online Appendix, we show that \autoref{prop:1} qualitatively extends to a setting in which the agent incurs a flow cost from searching. We also consider an alternative setting in which the principal and the agent have heterogeneous discount factors. We show that the essentially unique stationary equilibrium involves an inefficiently long search with ex-post higher-quality matches when the principal is more patient than the agent, whereas it involves an inefficiently short search with ex-post lower-quality matches when the principal is less patient than the agent.

Of course, some search markets will contain different salient features that are not present in our model. For example, the payment scheme in real-estate market typically differs: the agent pays only upon successful termination of search. We study such a model in the Online Appendix and find that the principal's incentives flip---the principal now  persuades the agent to terminate his search, in contrast to our baseline model in which she persuades the agent to continue searching. However, we show that the principal still extracts the full surplus in a stationary equilibrium, mirroring our finding in \autoref{prop:1}. Of course, there are other payments schemes and market features that we do not study here---for example, our model does not address what happens if the underlying distribution of quality is non-stationary or if it is endogenously determined in equilibrium. The features of equilibrium contracts in these alternative models are, to our knowledge, unknown and studying such models may yield novel economic insights.

\singlespacing
 \bibliographystyle{abbrvnat}
\nocite{}\bibliography{bibref}
\onehalfspacing

\section*{Appendix}\label{appendix}

\noindent\begin{proof}[Proof of \autoref{lemma:1}]~

\noindent (\emph{Only-if direction}): Suppose the agent never searches, i.e., $\underline u=\bar u$. Then $\bar u=m_\varnothing$ because $\underline u=m_\varnothing$. Suppose for contradiction that $\underline \theta<\delta m_\varnothing$. Then 
\begin{align*}
m_\varnothing=\bar u=\int_\Theta \max\{m,\delta m_\varnothing\}dF(m)>\int_\Theta m dF(m) =m_\varnothing,
\end{align*}
where the second equality follows from \eqref{eq:1}, and the strict inequality follows from $F$ having positive mass over the range $[\underline \theta, \delta m_\varnothing]$, and the final equality follows from the definition of $m_\varnothing$. This however yields the contradiction $m_\varnothing>m_\varnothing$. Hence, if the agent never searches then $\underline \theta\geq \delta m_\varnothing$.\\

\noindent (\emph{If direction}):  Suppose $\underline\theta\geq \delta m_\varnothing$. Then for any $G\in\mathcal{G}(F)$, 
\begin{align*}
\int_\Theta\max\{m, \delta m_\varnothing\} dG(m)=\int_\Theta mdG(m)=m_\varnothing.
\end{align*}
Hence, $m_\varnothing$ is the solution to the fixed point problem in \eqref{eq:1}, i.e., $u(G)=m_\varnothing$ for all $G\in\mathcal{G}(F)$. In particular, $\bar u\triangleq u(F)=u(G^\varnothing)\triangleq\underline u$, so the agent never searches. \end{proof}\\

\noindent\begin{proof}[Proof of \autoref{prop:1}]
\subsection*{Only-if direction}
We first provide a proof of the necessary conditions: a tuple $(p, G, U, V)$ constitutes a stationary equilibrium only if it satisfies points $(i)$-$(v)$ of the theorem. 

To that end, consider a contract $\langle p, G\rangle $ along with a pair of agent-principal continuation values $(U, V)$ that constitute a stationary equilibrium. Then the contract $\langle p,G\rangle$ must  solve the constrained profit maximization problem \eqref{eq:pm}. Since the objective function in \eqref{eq:pm} is increasing in the price $\widehat p$, the constraint \eqref{eq:pc} must bind at the maximum, i.e., $p=c_{G}(\delta U)-c_{G^\varnothing}(\delta U)$. Additionally, the contract $\langle p, G\rangle $ must be consistent with the agent's self-generation condition \eqref{eq:sga} which, given the binding participation constraint, becomes   
\[
U=\delta U+c_{G^\varnothing}(\delta U).
\]
The above fixed-point problem is the same as \eqref{eq:1'} evaluated at $G^\varnothing$, and thus has a unique solution $U=u(G^\varnothing)\triangleq \underline u$,  establishing point $(i)$ of the theorem.

Since the agent earns his autarky payoff in a stationary equilibrium, from \eqref{eq:os}, he stops his search if and only if the expected quality of a good exceeds the aurarky reservation value $\underline r$. We can therefore rewrite the principal's profit maximization problem \eqref{eq:pm} as
\begin{align*}
\label{eq:4}
\tag{PM$'$}
&\max_{\widehat G\in\mathcal{G}(F)} \hspace*{.2em} \underbrace{c_{\widehat G}(\underline r)-c_{G^\varnothing}(\underline r)}_{=\widehat p}+\delta \widehat G(\underline r)V.
\end{align*}

In order to solve the optimization problem in \eqref{eq:4}, we first note that the agent has only two actions upon observing a signal realization\textemdash continue searching or stop. Thus, there exists a solution to \eqref{eq:4} that is a binary posterior-mean distribution, as stated in the lemma below. 

\begin{lemma}\label{lemma:binary}
A distribution $G\in\mathcal{G}(F)$ is a solution to  \eqref{eq:4} if and only if there exists a binary distribution $G^\dagger\in\mathcal{G}(F)$ with $\supp(G^\dagger)=\{m_1^\dagger, m_2^\dagger\}$ such that 
\begin{enumerate}[$(i)$]
    \item $m_1^\dagger\leq \underline r<m_2^\dagger$,
    \item $G^\dagger(\underline r)= G(\underline r)$,
   \item $c_{G^\dagger}(x)\leq c_G(x)$ for all $x\in\Theta$ with equality at $x=\underline r$, and
    \item $G^\dagger$ is a solution to \eqref{eq:4}.
\end{enumerate}
\end{lemma}

\noindent\begin{proof}[Proof of \autoref{lemma:binary}]
\noindent (\emph{Only-if direction}): Suppose $G\in\mathcal{G}(F)$ is a solution to \eqref{eq:4}. We construct a binary distribution satisfying points $(i)-(iv)$ of the lemma.

 We first argue that the agent must  both stop and continue his search in each period with a positive probability, i.e., $G(\underline r)\in (0,1)$. To that end, for any $V\geq 0$, we have 
\begin{align*}
    c_G(\underline r)-c_{G^\varnothing}(\underline r)+G(\underline r)\delta V \geq c_F(\underline r)-c_{G^\varnothing}(\underline r)+F(\underline r)\delta V>0
\end{align*}
where the first inequality is because $G$ solves \eqref{eq:4} and the last equality follows by \hyperref[ass:1]{Assumption 1}. 

Suppose, for the sake of a contradiction, that $G(\underline r)=0$, then 
\begin{align*}
    c_G(\underline r)&=c_G(\underline \theta)-\int_{\underline \theta}^{\underline r}\big(1-G(m)\big)dm \leq c_{G^\varnothing}(\underline \theta)-\int_{\underline \theta}^{\underline r}\big(1-G^{\varnothing}(m)\big)dm =c_{G^\varnothing}(\underline r),
\end{align*}
where the inequality follows because $c_G(\underline \theta)=c_{G^\varnothing}(\underline \theta)$ by point $(a)$ of  \autoref{remark:1} and because $0=G(m)\leq G^{\varnothing}(m)$ for all $m\leq \underline r$. Since $G^\varnothing$ is a mean-preserving contraction of $G$, we also have $c_G\geq c_{G^\varnothing}$ pointwise, which implies that $c_G(\underline r)=c_{G^\varnothing}(\underline r)$. However, this implies that $c_G(\underline r)-c_{G^\varnothing}(\underline r)+G(\underline r)\delta V=0$, which is a contradiction.

If instead $G(\underline r)=1$, then 
\[
c_G(\underline r)=\int^{\bar\theta}_{\underline r}\big(1-G(m)\big)dm=0
\]
while $c_{G^\varnothing}(\underline r)=m_\varnothing-\underline r=(1-\delta)m_\varnothing>0$. However, this contradicts the fact that $c_G\geq c_{G^\varnothing}$ pointwise. Therefore, we conclude that $G(\underline r)\in (0,1)$.

Consider the following binary distribution $G^\dagger$, which we will show satisfies points $(i)-(iv)$ of the lemma: 
\[
G^\dagger(m)=\left\{\begin{array}{ccl}
     0 &\mbox{if} &~~~ m<\mathbb{E}_{ G}[\theta|\theta\leq  \underline   r]  \\
        G(\underline r) &\mbox{if} &~~~ \mathbb{E}_{ G}[\theta|\theta\leq \underline r]\leq m<\mathbb{E}_{ G}[\theta|\theta> \underline r]  \\
     1 &\mbox{if} &~~~ m\geq \mathbb{E}_{ G}[\theta|\theta>  \underline r]
     \end{array}\right..
\]
The terms $\mathbb{E}_{ G}[\theta|\theta\leq  \underline   r]\triangleq m_1^\dagger$ and $\mathbb{E}_{ G}[\theta|\theta>  \underline r]\triangleq m_2^\dagger$ exist as $G(\underline r)\in (0,1)$, so $G^\dagger$ is well-defined. By construction, $m_1^\dagger\leq \underline r<m_2^\dagger$, which establishes both point $(i)$ of the lemma, and implies $G^\dagger(\underline r)=G(\underline r)$, establishing point $(ii)$ of the lemma. 

It is straightforward to see that $G^\dagger$ is constructed as a mean-preserving contraction of $G$, and hence $c_{G^\varnothing}\leq c_{G^\dagger}\leq c_G$ pointwise. Additionally, notice that 
\begin{align*}
    c_G(\underline r)&=\int_{\underline r}^{\bar \theta}(m-\underline r)dG(m)\\[6pt]
    &=\big(\mathbb{E}_{G}[\theta|\theta\geq \underline r]-\underline r\big)\big(1-G(\underline r)\big)\\[6pt]
    &=\big(\mathbb{E}_{G}[\theta|\theta\geq \underline r]-\underline r\big)\big(1-G^\dagger(\underline r)\big)\\[6pt]
&=c_{G^\dagger}(\underline r).
\end{align*}
Therefore, we have established point $(iii)$ of the lemma.
We can now conclude that
\[
    c_G(\underline r)-c_{G^\varnothing}(\underline r)+G(\underline r)\delta V = c_{G^\dagger}(\underline r)-c_{G^\varnothing}(\underline r)+G^\dagger(\underline r)\delta V,
\]
so $G^\dagger$ is also a solution to \eqref{eq:4}, which establishes point $(iv)$ of the lemma, and completes the proof of existence of a binary distribution satisfying points $(i)-(iv)$. \\

\noindent (\emph{If direction}): Suppose there exists a binary distribution $G^\dagger\in\mathcal{G}(F)$ satisfying points $(i)$-$(iv)$ of the lemma. Then $G$ solves \eqref{eq:4} because for all $\widehat G\in\mathcal{G}(F)$, 
\begin{align*}
    c_G(\underline r)-c_{G^\varnothing}(\underline r)+G(\underline r)\delta V &= c_{G^\dagger}(\underline r)-c_{G^\varnothing}(\underline r)+G^\dagger(\underline r)\delta V\geq c_{\widehat G}(\underline r)-c_{G^\varnothing}(\underline r)+\widehat G(\underline r)\delta V,
\end{align*}
where the equality holds by points $(ii)$ and $(iii)$ of the lemma while the inequality holds by point $(iv)$ of the lemma. 
\end{proof}\\

Next, we show that if a binary distribution $G$ solves \eqref{eq:4}, then $c_{G}$ must be tangent to $c_F$ at some point $x\in \text{int}(\Theta)$.

\begin{lemma}\label{lemma:interval-x}
There exists a unique point $\bar x\in(\underline r,\bar \theta)$ such that $\mathbb{E}_F[\theta|\theta\leq \bar x]= \underline r$, and $\mathbb{E}_F[\theta|\theta\leq x]< \underline r<\mathbb{E}_F[\theta|\theta>x]$ if and only if $x<\bar x$. Furthermore, if a binary distribution $G\in\mathcal{G}(F)$  is a solution to  \eqref{eq:4}, then $G=G^{\pi(x)}$ for some $x\in[\underline r,\bar x]$.
\end{lemma}

The lemma states two things: First--recalling that a distribution $G$ such that $c_G$ is tangent to $c_F$ at point $x$ is induced by pass/fail signal $\pi(x)$--the agent is persuaded to stop his search conditional on``pass" and to continue his search conditional on ``fail" if and only if  $x\leq \bar x$. Second, any $G^{\pi(x)}$ with $x<\underline r$ yields a strictly lower payoff than $G^{\pi(\underline r)}$, implying that no point of tangency strictly below $\underline r$ can solve \eqref{eq:4}. \\

\noindent\begin{proof}[Proof of \autoref{lemma:interval-x}]
For all $x\in \text{int}(\Theta)$, we have $\underline r=\delta\underline u<\underline u=\mathbb{E}_F[\theta] \leq \mathbb{E}_F[\theta|\theta>  x]$.  However, not all $x\in\text{int}(\Theta)$ satisfy $\mathbb{E}_F[\theta|\theta\leq x]\leq \underline r$. 
Since $F$ is assumed to be absolutely continuous, the function $x\mapsto \mathbb{E}_F[\theta|\theta\leq x]$ is continuous and strictly increasing. Additionally, $\mathbb{E}_F[\theta|\theta\leq \underline r]<\underline r<\mathbb{E}_F[\theta|\theta\leq \bar \theta]$. Hence, there is a unique point $\bar x\in(\underline r, \bar \theta)$ such that $\mathbb{E}_F[\theta|\theta\leq x]\leq \underline r$ if and only if $x\leq \bar x$, with strict inequality whenever $x<\bar x$. This proves the first statement of the lemma.

 To prove the second statement, suppose for the sake of a contradiction that a binary distribution $G\in\mathcal{G}(F)$ which solves \eqref{eq:4} has $c_G(x)<c_F(x)$ for all $x\in\text{int}(\Theta)$. From \autoref{lemma:binary}, $G$ has  support $\{m_1, m_2\}$ satisfying $\underline \theta<m_1\leq \underline r<m_2<\bar\theta$.\footnote{As $F$ is assumed to be absolutely continuous, no distribution $G\in \mathcal{G}(F)$ has positive mass at $\underline \theta$ or $\bar \theta$. Thus, $\underline\theta<m_1$ and $m_2<\bar \theta$.} Therefore, $G(\underline r)=G(m_1) \in (0,1).$ For $\epsilon >0$, consider a distribution $\widetilde G$ given by 
\[
\widetilde G(m)=\left\{\begin{array}{ccl}
     0 &\mbox{if} &~~~ m<m_1-(1-G(\underline r))\epsilon  \\
        G(\underline r) &\mbox{if} &~~~ m_1-(1-G(\underline r))\epsilon \leq m<m_2+G(\underline r)\epsilon  \\
     1 &\mbox{if} &~~~ m\geq m_2+G(\underline r)\epsilon 
     \end{array}\right..
\]
We make the following three observations. First, $m_1-(1-G(\underline r))\epsilon<m_1\leq \underline r< m_2\leq m_2+G(\underline r)\epsilon$, so $\widetilde G(\underline r)=G(\underline r)$. Second, by construction,  $c_G\leq c_{\widetilde G}$ pointwise, with $c_G(\underline r)<c_{\widetilde G}(\underline r)$. Third, notice that $\lVert c_G-c_{\widetilde G}\rVert_\infty=G(\underline r)(1-G(\underline r))\epsilon<\epsilon$. These three observations imply, respectively, that  $\widetilde G$ persuades the agent to stop with the same probability as does $G$, $\widetilde G$ yields a strictly higher price than does $G$, and for small enough $\epsilon$ that $\widetilde G$ is a mean-preserving contraction of $F$, i.e. $\widetilde G\in\mathcal{G}(F)$. Therefore, we have constructed, for small enough $\epsilon$, a distribution which contradicts the assumption that $G$ is a solution to \eqref{eq:4}.

Therefore, any binary distribution $G$ that solves \eqref{eq:4} must have $c_G$ tangent to $c_F$ at some  $x\in\text{int}(\Theta)$, i.e., there exists some $x\in\text{int}(\Theta)$ such that $G=G^{\pi(x)}$. It remains only to show that $x\in[\underline r,\bar x].$ Such a binary distribution $G^{\pi(x)}$ must have $ \mathbb{E}_F[\theta|\theta\leq x]\leq \underline r<\mathbb{E}_F[\theta|\theta>  x]$. Hence, by point $(i)$ of \autoref{lemma:binary}, $x\leq\bar x$. By the definition of $G^{\pi(\underline r)}$, we have that $c_{G^{\pi(\underline r)}}(\underline r)=c_F(\underline r)> c_{G^{\pi(x)}}(\underline r)$ for all $x\neq \underline r$. Additionally, for $x<\underline r$, we have $G^{\pi(x)}(\underline r)=F(x)<F(\underline r)=G^{\pi(\underline r)}(\underline r)$. In other words, for all $x<\underline r$, the value of \eqref{eq:4} evaluated at $G^{\pi(x)}$ is strictly lower than its value evaluated at $G^{\pi(\underline r)}$. Hence, $x\geq \underline r$. This proves the second statement of the lemma. 
\end{proof}\\

A point of tangency $x\in [\underline r, \bar x]$ pins down both the probability of persuading the agent to continue his search, $G^{\pi(x)}(\underline r)=F(x)$, and the price $c_{G^{\pi(x)}}(\underline r)-c_{G^\varnothing}(\underline r)$. The agent searches for the longest duration (largest slope) when $x=\bar x$, and he pays the highest per-period price when $x=\underline r$ because $c_{G^{\pi(\underline r)}}(\underline r)\geq c_{G^{\pi(x)}}(\underline r)$ for all $x$ as previously mentioned.  We can further rewrite the price  associated with distribution $G^{\pi(x)}$ as
 \begin{align*}\label{eq:price}
     c_{G^{\pi(x)}}(\underline r)-c_{G^\varnothing}(\underline r)&=c_{G^{\pi(x)}}(r(G^{\pi(x)}))+\int^{r(G^{\pi(x)})}_{\underline r}\big(1-G^{\pi(x)}(m)\big)dm-c_{G^\varnothing}(\underline r)\\[8pt]
     &=c_{G^{\pi(x)}}(r(G^{\pi(x)}))+\big(r(G^{\pi(x)})-\underline r\big)\big(1-F(x)\big)-c_{G^\varnothing}(\underline r) \notag\\[8pt]
     &=\big(u(G^{\pi(x)})-\underline u\big)\big(1-\delta F(x)\big) \tag{5},
 \end{align*}
 where $r(G^{\pi(x)})=\delta u(G^{\pi(x)})$ is the solution to \eqref{eq:2}, that is, the agent's reservation value if he were to observe $G^{\pi(x)}$ for free in every period. The first equality above follows from breaking up the integral in the definition of $c_{G^{\pi(x)}}(\underline r)$, the second equality follows by definition of $G^{\pi(x)}$ and the fact that $\mathbb{E}_F[\theta|\theta\leq x]\leq \underline r\leq  r(G^{\pi(x)})<\mathbb{E}_F[\theta|\theta>x]$,\footnote{For all $x\in [\underline r, \bar x]$, we have  $\mathbb{E}_F[\theta|\theta\leq x]\leq r(G^{\pi(x)})<\mathbb{E}_F[\theta|\theta>x]$. The first inequality follows because $\underline r\leq r(G^{\pi(x)})$ and  $\mathbb{E}_F[\theta|\theta\leq x]\leq \underline r$. The second inequality follows because if $\mathbb{E}_F[\theta|\theta>x]\leq r(G^{\pi(x)})$, then $c_{G^{\pi(x)}}(r(G^{\pi(x)}))=0$, which by \eqref{eq:2} would imply that $r(G^{\pi(x)})=0<\underline r$, yielding a contradiction.} and the last equality follows from evaluating the fixed point problems in \eqref{eq:2} at $G^{\pi(x)}$ and $G^\varnothing$.  Hence, the value of the objective function in \eqref{eq:4} is entirely determined by the point of tangency, which reduces our infinite dimensional optimization problem to a single dimensional problem:
 \begin{align*}
&\max_{x\in[\underline r, \bar x]} \hspace*{.2em} \big(u(G^{\pi(x)})-\underline u\big)\big(1-\delta F(x)\big)+\delta F(x)V.
\end{align*}

Of course, in a stationary equilibrium, the principal's continuation value $V$ must be consistent with \eqref{eq:sgp}, i.e., 
\begin{align*}
V=&\max_{x\in[\underline r, \bar x]} \hspace*{.2em} \big(u(G^{\pi(x)})-\underline u\big)\big(1-\delta F(x)\big)+\delta F(x)V\\[8pt]
\label{eq:404}
\tag{6}
=&\max_{x\in[\underline r, \bar x]} \hspace*{.2em} u(G^{\pi(x)})-\underline u.
\end{align*}
Thus, the principal's maximization problem reduces to a surplus maximization problem. The following lemma demonstrates that $x=\bar r$ is the unique point of tangency that solves \eqref{eq:404}.

\begin{lemma}\label{lemma:surplus_max}
 The McCall reservation value $\bar r$ lies in the open interval $(\underline r, \bar x)$, and is the unique solution to \eqref{eq:404}.
\end{lemma}

\noindent\begin{proof}[Proof of \autoref{lemma:surplus_max}] 
We first prove that $\bar r\in (\underline r, \bar x)$. By  \hyperref[ass:1]{\Cref{ass:1}}, we have $\underline r<\bar r$. Hence, we need only prove $\bar r<\bar x$. From  \autoref{lemma:interval-x}, proving that $\bar r< \bar x$  is equivalent to showing that
$\mathbb{E}_F[\theta|\theta\leq \bar r]<\underline r$. Suppose for contradiction that $\mathbb{E}_F[\theta|\theta\leq \bar r]\geq\underline r$. Then
\begin{align*}
\bar r&=\left(\frac{\delta}{1-\delta}\right)\cdot c_{G^{\pi(\bar r)}}(\bar r)\\[6pt]
&\leq \left(\frac{\delta}{1-\delta}\right)\cdot c_{G^{\pi(\bar r)}}(\mathbb{E}_F[\theta|\theta\leq \bar r])\\[6pt]
&=\left(\frac{\delta}{1-\delta}\right)\cdot c_{G^\varnothing}(\mathbb{E}_F[\theta|\theta\leq \bar r])\\[6pt]
&\leq \left(\frac{\delta}{1-\delta}\right)\cdot c_{G^\varnothing}(\underline r)\\[6pt]
&=\underline r,
\end{align*}
where the first equality follows because $\bar r=r(G^{\pi(\bar r)})$,  the first inequality follows because $c_{G^{\pi(\bar r)}}(\cdot)$ is weakly decreasing and $\mathbb{E}_F[\theta|\theta\leq \bar r]<\bar r$, the second equality follows because $c_{G^{\pi(\bar r)}}(x')=c_{G^\varnothing}(x')$ for all  $x'\notin \big(\mathbb{E}_F[\theta|\theta\leq \bar r], \mathbb{E}_F[\theta|\theta>\bar r]\big)$ by \eqref{eq:3}, the last inequality follows because $c_{G^\varnothing}(\cdot)$ is weakly decreasing and $\underline r\leq \mathbb{E}_F[\theta|\theta\leq \bar r]$ from our initial assumption for the sake of a contradiction, and the last equality follows because $\underline r$ is the unique solution to \eqref{eq:2} evaluated at $G^\varnothing$. However, this yields $\bar r\leq \underline r$, which is not possible given \hyperref[ass:1]{\Cref{ass:1}}. Hence, $\mathbb{E}_F[\theta|\theta\leq \bar r]<\underline r$, which proves that $\bar r<\bar x$.

Next, we prove that $\bar r$ is the unique solution to \eqref{eq:404}. For any $x\in\text{int}(\Theta)\backslash\{\bar r\}$,
\[
\bar r=\left(\frac{\delta}{1-\delta}\right)\cdot c_{F}(\bar r)=\left(\frac{\delta}{1-\delta}\right)\cdot c_{G^{\pi(\bar r)}}(\bar r)>\left(\frac{\delta}{1-\delta}\right)\cdot c_{G^{\pi(x)}}(\bar r)
\]
where the first equality follows because $\bar r$ is the unique solution to \eqref{eq:2} evaluated at $F$, the second equality follows because $c_{G^{\pi(\bar r)}}(\cdot)$ is tangent to $c_F$ at $\bar r$, and the strict inequality follows because $c_{G^{\pi(x)}}(x')=c_F(x')$ only for $x'\in \{\underline \theta, x, \bar \theta\}$. Since  $r(G^{\pi(x)})$  is the unique solution to \eqref{eq:2} evaluated at $G^{\pi(x)}$,  $\bar r\neq r(G^{\pi(x)})$. In fact, since $\bar r$ is the highest possible reservation value, we conclude that $\bar r> r(G^{\pi(x)})$. Finally, because a reservation value is equal to the agent's discounted continuation value, we have $u(G^{\pi(x)})<\bar u$ for any $x\in\text{int}(\Theta)\backslash\{\bar r\}$. \end{proof}\\

 Thus, the unique solution to \eqref{eq:404} is at $x=\bar r$, which yields $V=\bar u-\underline u$, establishing point $(ii)$ of the theorem. Furthermore, it implies that   $G^{\pi(\bar r)}$ is the unique binary distribution that solves \eqref{eq:4}.
 
 So far, we have shown that if the tuple $(p, G, U, V)$ constitutes a stationary equilibrium, then the agent's continuation value is $U=\underline u$ and the principal's continuation value is $V=\bar u-\underline u$. From  \eqref{eq:price}, we pin down the price as $p=(\bar u-\underline u)(1-\delta F(\bar r))$, which establishes point $(iii)$ of the theorem. Furthermore, by point $(ii)$ of \autoref{lemma:binary}, $G(\underline r)=G^{\pi(\bar r)}(\underline r)$, and the latter is equivalent to $F(\bar r)$ by construction, establishing point $(iv)$ of the theorem. Finally, by point $(iii)$ of \autoref{lemma:binary}, $c_{G^{\pi(\bar r)}}(x)\leq c_G(x)$ for all $x\in\Theta$ with equality at $x=\underline r$, establishing point $(v)$ of the theorem. 

\subsection*{If direction}

 \noindent We now complete the proof by showing that points $(i)$-$(v)$ are sufficient conditions for a stationary equilibrium. To that end, suppose the principal proposes a contract $\langle p, G\rangle$ that satisfies points $(iii)$-$(v)$  of \autoref{prop:1} in every period following any history of events. Additionally, suppose the agent and the principal anticipate  continuation values satisfying $(i)$ and $(ii)$ of 
\autoref{prop:1}, i.e., $U=\underline u$ and $V=\bar u-\underline u$, respectively, following any history.
 We must show that \eqref{eq:os}, \eqref{eq:pc}, \eqref{eq:pm}, \eqref{eq:sga}, and \eqref{eq:sgp} are all satisfied.

As the agent anticipates the same continuation value $U=\underline u$ following any history of events, sequential rationality implies that he should stop searching in any given period if he samples a good whose expected quality satisfies $m>\delta \underline u=\underline r$. Hence, \eqref{eq:os} is satisfied. 

If the agent accepts the contract $\langle p, G\rangle$, he gets an expected utility of 
\begin{align*}
      \underline r+c_G( \underline r)-p &= \underline r+c_{G^{\pi(\bar r)}}(\underline r)-(\bar u-\underline u)\big(1-\delta F(\bar r)\big)\\[6pt]
     &= \underline r+c_{G^{\pi(\bar r)}}(\bar r)+\int^{\bar r}_{\underline r}\big(1-G^{\pi(\bar r)}(m)\big)dm-(\bar u-\underline u)\big(1-\delta F(\bar r)\big)\\[6pt]
     &= \underline r+c_{F}(\bar r)+(\bar r-\underline r)(1-F(\bar r))-(\bar u-\underline u)\big(1-\delta F(\bar r)\big)\\[6pt]
    &= \underline r+\left(\frac{1-\delta}{\delta}\right)\bar r+(\bar r-\underline r)(1-F(\bar r))-(\bar u-\underline u)\big(1-\delta F(\bar r)\big)\\[6pt]
    &=\underline r+\left(\frac{1-\delta}{\delta}\right)\underline r\\[6pt]
    &=\underline r+c_{G^\varnothing}(\underline r),
     \end{align*}
where the first equality follows by points $(iii)$ and $(v)$
 of the theorem, the second equality follows by breaking up the integral in $c_{G^{\pi(\bar r)}}(\underline r)$, the third equality follows because  $c_{G^{\pi(\bar r)}}$ is tangent to $c_F$ at $x=\bar r$ and because $G^{\pi(\bar r)}(m)=F(\bar r)$ for all $m\in \big[\mathbb E_F[\theta|\theta\leq \bar r], \mathbb E_F[\theta|\theta>\bar r]\big]\supset [\underline r, \bar r]$, the fourth equality follows because $\bar r$ is the unique reservation value that solves \eqref{eq:2} evaluated at $F$, and the last equality follows because $\underline r$ is the unique reservation value that solves \eqref{eq:2} evaluated at $G^\varnothing$. On the other hand, if the agent rejects the contract, he gets an expected utility of $\underline r+c_{G^\varnothing}(\underline r)$. Hence, the agent's participation constraint \eqref{eq:pc} binds for the contract $\langle p, G\rangle$. 
 
Given that the agent will accept the contract, his expected utility in any given period is 
\begin{align*}
\delta \underline u+c_G(\delta\underline u)-p=&\delta\underline u+c_{G^\varnothing}(\delta\underline u)\\
=&\underline u,
\end{align*}
where the first equality follows because the constraint \eqref{eq:pc} binds, and the second equality follows from \eqref{eq:1'}. Hence, the agent's self-generation condition \eqref{eq:sga} is satisfied.

Similarly, the principal's expected utility in any given period is
\begin{align*}
p+\delta G(\underline r)(\bar u-\underline u)=&(\bar u-\underline u)(1-\delta F(\bar r)) +\delta F(\bar r)(\bar u-\underline u)\\
=& \bar u-\underline u,
\end{align*}
where the first equality follows from $(iii)$ and $(iv)$ of \autoref{prop:1}. Hence, the principal's self-generation condition \eqref{eq:sgp} is also satisfied. Furthermore, since $\bar u-\underline u$ is the maximum payoff the principal can ever earn, \eqref{eq:pm} is also satisfied. Hence, a pair of continuation values $(U,V)=(\underline u, \bar u-\underline u)$ along with any contract $\langle p, G\rangle$ satisfying $(iii)$-$(v)$ of \autoref{prop:1} constitutes a stationary equilibrium. Clearly, the contract  $\langle p, G\rangle=\langle (\bar u-\underline u)\big(1-\delta F( \bar r)\big),G^{\pi(\bar r)}\rangle$ is one such contract, proving existence.
\end{proof}\\

\noindent\begin{proof}[Proof of \autoref{cor:1}]
From point $(iv)$ of \autoref{prop:1}, any  $G\in\mathcal{G}(F)$ that supports a stationary equilibrium must satisfy $G( \underline r)=F(\bar r)$, which implies by point $(d)$ of \autoref{remark:1} that $\partial_+ c_G( \underline r)=F(\bar r)-1$. Additionally, from point $(v)$ of \autoref{prop:1}, $c_G\geq c_{G^{\pi(\bar r)}}$  
pointwise. We also have $c_F\geq c_G$ pointwise for any $G\in\mathcal{G}(F)$. By construction, $c_F(\bar r)=c_{G^{\pi(\bar r)}}(\bar r)$, and $\partial_+ c_F( \bar r)=\partial_+ c_{G^{\pi(\bar r)}}( \bar r)=F(\bar r)-1$. Since $c_G$ is sandwiched between $c_F$ and $c_{G^{\bar r}}$, we have $\partial_+ c_G( \bar r)=F(\bar r)-1$ as well. 
Finally, by the convexity of $c_G$ (see point $(c)$ of \autoref{remark:1}), we have 
\[
\underbrace{\partial_+ c_G(\underline r)}_{=F(\bar r)-1}\hspace*{.5em}\leq \hspace*{.5em}\underbrace{\partial_+ c_G(x)}_{=G(x)-1}\hspace*{.5em}\leq \hspace*{.5em}\underbrace{\partial_+ c_G(\bar r)}_{=F(\bar r)-1}
\]for all $x\in [\underline r, \bar r]$. Equivalently, $G(x)=F(\bar r)$ for all $x\in [\underline r,  \bar r]$.
\end{proof}\\

\noindent\begin{proof}[Proof of \autoref{cor_delta_comp}]
Fix $\Theta$ and an absolutely continuous distribution $F$ over $\Theta$. For the purposes of this proof, we consider a search environment indexed by the discount factor $\mathcal{M}^\delta\triangleq(\Theta,F,\delta)$. Define the mapping $\delta \mapsto \bar r(\delta)$ to be the McCall reservation value for environment $\mathcal M^\delta$, i.e., $\bar r(\delta)$ is the unique solution of \eqref{eq:2} for $G=F$ when the discount factor is $\delta$. Note that $\bar r(\cdot)$ is a continuous and increasing function. Furthermore, $\bar r(\cdot)$ is strictly increasing over the open interval  $(\max\{0, \underline \theta/m^\varnothing\}, 1)$, the region over which \hyperref[ass:1]{\Cref{ass:1}} is satisfied. Finally, $\bar r(\delta)\to \bar \theta$ as $\delta\to 1$, i.e., in a McCall setting, the agent searches until he finds the highest quality good when search becomes costless.

Let $\delta'$ and $\delta''$ be such that $0<\delta'<\delta''<1$. We show that the probability with which the agent terminates his search in any given period is weakly higher in the essentially unique stationary equilibrium outcome of $\mathcal{M}^{\delta'}$ than that of $\mathcal{M}^{\delta''}$, which therefore implies that he searches in expectation for a shorter duration in $\mathcal{M}^{\delta'}$ than he does in $\mathcal{M}^{\delta''}$. To that end, first note that if \hyperref[ass:1]{\Cref{ass:1}} is not satisfied for $\mathcal{M}^{\delta'}$, then the corollary trivially holds because the agent terminates his search with probability one in the first period (see \autoref{lemma:1}). Next, suppose \hyperref[ass:1]{\Cref{ass:1}} is satisfied for $\mathcal{M}^{\delta'}$, which implies it is also satisfied for $\mathcal{M}^{\delta''}$. From \autoref{prop:1}, given $\mathcal M^\delta$, the the agent terminates his search in any period with probability $1-F(\bar r(\delta))$. The statements of the corollary then immediately follow from the properties of $\bar r(\cdot)$ we outlined above, namely, $\bar r(\delta)$ is strictly increasing in $\delta$ and converges to $\bar \theta$ as $\delta\to 1$. \end{proof}\\

\noindent\begin{proof}[Proof of \autoref{prop:2}]
     Suppose a family of contracts $\{\langle p_z,G_z\rangle\}_{z\in Z}$ 
 along with a pair of agent-principal continuation values $(U, V)$ constitute a stationary equilibrium. Since the objective function in \eqref{eq:pm2} is increasing in the price $\widehat p$, the constraint \eqref{eq:pc2} must   bind at the maximum, i.e., $p_z=c_{G_z}(\delta U)-c_{G_z^\varnothing}(\delta U)$ for all $z\in Z$. The fixed-point problem in \eqref{eq:sga2} then becomes  
\begin{align*}
U=&\delta U+\int_Z c_{G_z^\varnothing}(\delta U)\xi(dz)\\[6pt]
=&\delta U+c_{G^\xi}(\delta U),
\end{align*}
where the second equality holds because the mapping $G\to c_G$ is linear, which implies $\mathbb{E}_\xi[c_{G^\varnothing_z}(\cdot)]=c_{\mathbb{E}_\xi[G^\varnothing_z]}(\cdot)$, and by law of total probability, $\mathbb{E}_\xi[G^\varnothing_z]=G^\xi$. 
The above equality is the same as \eqref{eq:1'} evaluated at $G^\xi$ and thus has a unique solution $U=u(G^\xi)$, establishing point $(i)$ of the theorem. The agent's optimal stopping rule is then characterized by the reservation value $\delta u(G^\xi)\triangleq r^\xi$ from \eqref{eq:os}.

In any equilibrium, the principal can always guarantee herself a payoff of zero and cannot extract more than the maximal surplus. Hence, the principal's stationary equilibrium continuation value satisfies $V\in [0, \bar u-u(G^\xi)]$. To prove point $(ii)$ of the theorem, we proceed in two steps: First, given the principal's  continuation payoff $V$, we characterize the principal's profit according to \eqref{eq:pm2} following any realization $z\in Z$ of the public signal. Second, we show that there is a unique $V$ that is consistent with the principal's self-generation condition \eqref{eq:sgp2}. \\

\noindent\textbf{Step 1:} Given the agent's reservation value and the binding constraint \eqref{eq:pc2}, the principal's profit maximization in \eqref{eq:pm2} following any $z\in Z$ can be expressed as 
\[
\label{eq:7}
\tag{7}
  G_z\in\argmax_{\widehat G\in \mathcal{G}(F_z)}\hspace*{.2em} c_{\widehat G}(r^\xi)-c_{G^\varnothing_z}(r^\xi)+ \widehat G(r^\xi) \delta V.
\]   
Let $\psi_z(V)$ represent the value function of the maximization problem associated with \eqref{eq:7}. Below, we fully characterize the principal's value function for any public signal realization. To that end, we fix a public signal realization $z\in Z$ and consider three exhaustive cases depending on the support of the interim belief $F_z$:\\

\noindent \textbf{Case 1: $r^\xi<\underline \theta_z$.} In this case, the agent's interim belief is too optimistic as even the lowest quality possible given his beliefs will fall above his reservation value. Hence, he has no value for information and will stop his search with probability one. Formally, $c_{F_z}(x)=c_{G^\varnothing_z}(x)$ for all $x< \underline \theta_z$. Thus, for all $\widehat G\in\mathcal{G}(F_z)$, $c_{\widehat G}(r^\xi)=c_{G^\varnothing_z}(r^\xi)$ and $\widehat G(r^\xi)=G^\varnothing_z(r^\xi)=0$. Thus, $\psi_z(V)=0$ and the principal can achieve this value by offering contract $\langle p_z, G_z\rangle=\langle 0, G^\varnothing_z\rangle$. \\

\noindent \textbf{Case 2: $r^\xi>\bar \theta_z$.} In this case, the agent's interim belief is too pessimistic as even the highest quality possible given his beliefs will fall short of his reservation value. Hence, he has no value for information and will continue to search with probability one. 

Formally,  for all $\widehat G\in \mathcal{G}(F_z)$, $0=c_{F_z}(\underline \theta_z)\geq c_{F_z}(r^\xi)\geq c_{\widehat G}(r^\xi)\geq c_{G^\varnothing_z}(r^\xi)=0$, where the first inequality follows because $c_{F_z}$ is a weakly decreasing function, and the second and third inequalities follows because $c_{F_z}\geq c_{\widehat G}\geq c_{G^\varnothing_z}$ pointwise, and the last equality follows because $r^\xi>\underline\theta_z$. Hence, $c_{\widehat G}(r^\xi)=0$ which further implies $\widehat G(r^\xi)=1$. Thus, $\psi_z(V)=\delta V$ and the principal can achieve this value by offering contract $\langle p_z, G_z\rangle=\langle 0, G^\varnothing_z\rangle$.\\

\noindent \textbf{Case 3: $r^\xi\in [\underline \theta_z, \bar \theta_z]$.} 
In this case, the agent has a positive WTP for information, and thus is willing to contract with the principal. We next show that without loss of optimality, we can restrict attention to a sub-class of pass/fail signals $\{\pi(x_z)\}_{x_z\in [\underline \theta_z, \bar \theta_z]}$. 

\begin{lemma}\label{lemma_theorem_2_proof}
Suppose $r^\xi\in [\underline \theta_z, \bar \theta_z]$. Then the principal's value function can be attained by a pass/fail signal $\pi(x_z)$ for some $x_z\in [r^\xi, \bar x_z]$.
\end{lemma}
\begin{proof}[Proof of  \autoref{lemma_theorem_2_proof}]
The proof, which follows a similar argument to that of \autoref{lemma:interval-x}, is omitted.
\end{proof}\\
 
We can now express the profit maximization problem in \eqref{eq:7} as solving for the optimal pass/fail cutoff:
\begin{align*}
  x_z\in\argmax_{x\in [r^\xi, \bar x_z]}\hspace*{.2em} c_{F_z}(x)+\big(1-F_z(x)\big)(x-r^\xi)-c_{G^\varnothing_z}(r^\xi)+ F_z(x) \delta V.
\end{align*}  
 This objective function is (weakly) increasing for  $x<\delta V+r^\xi$ and (weakly) decreasing for $x>\delta V+r^\xi$. Thus, $x_z=\min\{\bar x_z, \delta V+r^\xi\}$ is a solution, and the principal's value function is 
\[
\psi_z(V)=\left\{\begin{array}{ccl}
 c_{F_z}(\delta V+r^\xi)-c_{G^\varnothing_z}(r^\xi)+\delta V &\mbox{if} & \delta V+r^\xi\leq \bar x_{z}\\
 F_z(\bar x_z)\delta V   & \mbox{if} &   \delta V+r^\xi> \bar x_{z}\end{array}\right..
\]
That $\psi_z(V)=  c_{F_z}(\delta V+r^\xi)-c_{G^\varnothing_z}(r^\xi)+\delta V$ if $\delta V+r^\xi\leq \bar x_{z}$ follows directly from substituting $\delta V+r^\xi$ into the objective function for the optimal pass/fail cutoff. That $\psi_z(V)= F_z(\bar x_z)\delta V$ if $\delta V+r^\xi> \bar x_{z}$ follows from the fact that
\begin{align*}
c_{F}(\bar x_z)+\big(1-F_z(\bar x_z)\big)(\bar x_z-r^\xi)&=\int^{\bar \theta_z}_{\bar x_z}(m-r^\xi)dF_z(m)\\[6pt]
&=\mathbb{E}_{F_z}[\theta]-r^\xi-\Big(\mathbb{E}_{F_z}[\theta|\theta\leq \bar x_z]-r^\xi\Big)F_z(\bar x_z)\\[6pt]
&=\max\{\mathbb{E}_{F_z}[\theta]-r^\xi, 0\}\\[6pt]
\label{eq:9}
\tag{8}
&= c_{G^\varnothing_z}(r^\xi)
\end{align*}
where the third equality follows because either $\mathbb{E}_{F_z}[\theta]\leq r^\xi$, in which case $\bar x_z=\bar\theta_z$ implying that $\mathbb{E}_{F_z}[\theta|\theta\leq \bar x_z]=\mathbb{E}_{F_z}[\theta]$ and $F_z(\bar x_z)=1$, or $\mathbb{E}_{F_z}[\theta]> r^\xi$, in which case $\bar x_z<\bar\theta_z$ and $\mathbb{E}_{F_z}[\theta|\theta\leq \bar x_z]=r^\xi$. This concludes Case 3.

We have now characterized the value function for all possible cases, which concludes Step 1. For convenience, we summarize below the principal's value function:
\small{\[
\psi_z(V)=\left\{\begin{array}{ccl}
  0&\mbox{if} & z\in Z_1\triangleq \{z'\in Z: r^\xi<\underline \theta_{z'}\}\\
\delta V&\mbox{if} &z\in Z_2\triangleq \{z' \in Z: r^\xi> \bar \theta_{z'}\}\\
    c_{F_z}(\delta V+r^\xi)-c_{G^\varnothing_z}(r^\xi)+\delta V &\mbox{if} &  z\in Z_3(V)\triangleq\{z'\in Z: r^\xi\in[\underline\theta_{z'}, \bar\theta_{z'}] \hspace*{.1em} \wedge \hspace*{.1em}  \bar x_{z'}\geq \delta V+r^\xi\} \\
  F_z(\bar x_z)\delta V   & \mbox{if} &  z\in Z_4(V)\triangleq\{z'\in Z: r^\xi\in[\underline\theta_{z'}, \bar\theta_{z'}] \hspace*{.1em} \wedge \hspace*{.1em}  \bar x_{z'}<\delta V+r^\xi\}
\end{array}\right..
\]}
\normalsize

\noindent \textbf{Step 2:} The principal's continuation value $V$ must be consistent with the principal's self-generation condition \eqref{eq:sgp2}, i.e., 
\small{\begin{align*}
 V=&\int_Z \psi_z(V)\xi(dz)\\[6pt]
    =& \delta V \xi(Z_2)+\int_{Z_3(V)}\left(c_{F_z}(\delta V+r^\xi)-c_{G^\varnothing_z}(r^\xi)+\delta V\right) \xi(dz) + \int_{Z_4(V)}\delta VF_z(\bar x_z) \xi(dz)\\[6pt]
    =&\int_{Z}\left(c_{F_z}(\delta V+r^\xi)-c_{G^\varnothing_z}(r^\xi)+\delta V \right)\xi(dz) -\int_{Z_1}\left(c_{F_z}(\delta V+r^\xi)-c_{G^\varnothing_z}(r^\xi)+\delta V \right)\xi(dz)\\[6pt]
    &-\int_{Z_2}\left(c_{F_z}(\delta V+r^\xi)-c_{G^\varnothing_z}(r^\xi)\right) \xi(dz)-\int_{Z_4(V)}\left(c_{F_z}(\delta V+r^\xi)-c_{G^\varnothing_z}(r^\xi)+\big(1-F_z(\bar x_z)\big)\delta V \right)\xi(dz)  \label{eq:10}
    \tag{9}.
\end{align*}}
\normalsize
Let us simplify each of these terms separately. The first term in \eqref{eq:10} simplifies to
\begin{align*}
    \int_{Z}\left(c_{F_z}(\delta V+r^\xi)-c_{G^\varnothing_z}(r^\xi)+\delta V\right) \xi(dz)&=c_{F}(\delta V+r^\xi)-c_{G^\xi}(r^\xi)+\delta V \\[6pt]
    &=c_{F}(\bar r)+\int_{\delta V+r^\xi}^{\bar r}\big(1-F(m)\big)dm-c_{G^\xi}(r^\xi)+\delta V\\[6pt]
    &=(1-\delta)\big(\bar u-u(G^\xi)\big)+\int_{\delta V+r^\xi}^{\bar r}\big(1-F(m)\big)dm +\delta V 
\end{align*}
where the first equality follows because the mapping $G\to c_G$ is linear and because $\mathbb{E}_\xi[F_z]=F$ and $\mathbb{E}_\xi[G^\varnothing_z]=G^\xi$, the second follows by breaking up the integral in the definition of $c_F(r^\xi+\delta V)$, and the last equality follows by evaluating \eqref{eq:1'} at $F$ and $G^\xi$.

For all $z\in Z_1$, we have $c_{F_z}(r^\xi)=c_{G^\varnothing_z}(r^\xi)$. Thus, the integrand of the second term in \eqref{eq:10} simplifies to 
\begin{align*}
    c_{F_z}(\delta V+r^\xi)-c_{G^\varnothing_z}(r^\xi)+\delta V &=c_{F_z}(r^\xi)-\int^{\delta V+r^\xi}_{r^\xi}\hspace*{-.25em}\big(1-F_z(m)\big)dm-c_{G^\varnothing_z}(r^\xi)+\delta V\\[6pt]
    &=\int^{\delta V+r^\xi}_{r^\xi}\hspace*{-.5em}F_z(m)dm.
\end{align*}
Note that the above integral is non-zero only if $\delta V+r^\xi> \underline\theta_z$, which is equivalent to $z\in Z_A(\bar r-r^\xi-\delta V)$. 

For all $z\in Z_2$, we have $\delta V+r^\xi\geq r^\xi>\bar\theta_z$, which implies that $c_{F_z}(\delta V+r^\xi)=c_{G^\varnothing_z}(r^\xi)=0$. Thus, the third term in \eqref{eq:10} is zero.

Finally, for all $z\in Z_4(V)$, the integrand in the last term simplifies to 
\small{\begin{align*}
    c_{F_z}(\delta V+r^\xi)-c_{G^\varnothing_z}(r^\xi)+\big(1-F_z(\bar x_z)\big)\delta V &=c_{F_z}(\bar x_z)-\int^{\delta V+r^\xi}_{\bar x_z}\big(1-F_z(m)\big)dm-c_{G^\varnothing_z}(r^\xi)+\big(1-F_z(\bar x_z)\big)\delta V\\[6pt]
    &=\int^{\delta V+r^\xi}_{\bar x_z}\left(F_z(m)-F_z(\bar x_z)\right)dm,
\end{align*}}%
\normalsize
where the first equality follows by breaking up the integral in the definition of $c_{F_z}(\delta V+r^\xi)$ and the second equality follows from \eqref{eq:9}. Note that the above integral is non-zero only if $\bar\theta_z>\bar x_z$, which is equivalent to $z\in Z_B(\bar r-r^\xi-\delta V)$.

Put together, we can write the expression in \eqref{eq:10} as 
\small{\begin{align*}
    V= &\bar u-u(G^\xi)+\frac{1}{1-\delta}\int_{\delta V+r^\xi}^{\bar r}1-F(m)dm\\[6pt]
    &-\frac{1}{1-\delta}\left(\int_{Z_A}\int^{\delta V+r^\xi}_{\underline\theta_z}F_z(m)dm \xi(dz)+\int_{Z_B}\int^{\delta V+r^\xi}_{\bar x_z}\left(F_z(m)-F_z(\bar x_z)\right)dm \xi(dz)\right)\\[6pt]
    =&\bar u-u(G^\xi)+\frac{k}{1-\delta}-\frac{1}{1-\delta}\Phi(k^*),
\end{align*}}%
\normalsize
where $\Phi(\cdot)$ is as given in \eqref{eq:5}. The last equality follows by  using a change of variable $V=\bar u-u(G^\xi)-k^*/\delta$ so that $\delta V+r^\xi=\bar r-k^*$. Since we have both $V=\bar u-u(G^\xi)+k^*/(1-\delta)-\Phi(k^*)/(1-\delta)$ and $V=\bar u-u(G^\xi)-k^*/\delta$, it must be the case that $k^*$ solves the fixed point 
\[
k=\delta\Phi(k),
\]
which gives us the representation in \autoref{prop:2}. 

It remains to show such a solution exists and is unique. Note $k-\delta\Phi(k)$ is continuous and strictly increasing in $k$ because both $Z_A(\cdot)$ and $Z_B(\cdot)$ are decreasing by construction (in the inclusion order). Additionally, 
\[
0-\delta\Phi(0)=-\delta\left(\int_{Z_A(k)}\int_{\underline \theta_z}^{\bar r}F_z(m)dm \hspace*{.2em}\xi(dz)+\int_{Z_B(0)}\int_{\bar x_z}^{\bar r}\left(F_z(m)-F_z(\bar x_z)\right)dm \hspace*{.2em}\xi(dz)\right)\leq 0.
\]
The first integral is non-negative because $\bar r>\underline\theta_z$ for all $z\in Z_A(0)$, and the second integral is non-negative because $\bar r>\bar x_z$ for all  $z\in Z_B(0)$. On the other hand, 
\[
\bar r-r^\xi-\delta\Phi(\bar r-r^\xi)=\int_{r^\xi}^{\bar r}
\big(1-\delta F(m)\big)dm>0
\] 
since $Z_A(\bar r-r^\xi)=Z_B(\bar r-r^\xi)=\emptyset$.
Thus, there exists a unique $k^*\in [0, \bar r-r^\xi]$ that solves the fixed point problem, which establishes point $(ii)$ of \autoref{prop:2} and completes Step 2.

We have found a unique pair $(U,V)$ and constructed a family of pass/fail signals that attains these values for the agent and principal, respectively, such that \autoref{def:2} is satisfied. Therefore, an equilibrium exists, completing the proof.
\end{proof}\\

\noindent \begin{proof}[Proof of \autoref{cor_info_comp}]
Fix $\Theta$, an absolutely continuous distribution $F$ over $\Theta$, and a discount factor $\delta$. Let $\mathcal M\triangleq (\Theta, F, \delta)$ be a search environment without a public signal, and let $\mathcal M^\xi\triangleq (\Theta, F, \delta, \xi, Z)$ be a search environment with public signal $(\xi, Z)$.  We show that the probability with which the agent terminates his search in any given period is weakly higher in the unique stationary equilibrium outcome of $\mathcal{M}^\xi$ than that of $\mathcal{M}$, which therefore implies that he searches in expectation for a shorter duration in $\mathcal{M}^\xi$ than he does in $\mathcal{M}$.

Recall that, from \autoref{prop:1}, the per-period probability that the agent terminates his search in the unique stationary equilibrium outcome of $\mathcal M$ is $1-F(\bar r)$ . Given $\mathcal M^\xi$, for each realization $z\in Z$ and any period $t\geq 1$, there are three exhaustive cases to consider: $(i)$ if $z\in Z_A(k^*)$, the agent stops his search with probability $1> 1-F_z(\bar r-k^*)$, where the inequality follows since $\bar r-k^*>\underline \theta_z$ by construction of $Z_A(k^*)$; $(ii)$ if $z\in Z_B(k^*)$, the agent stops with probability $1-F_z(\bar x_z)> 1-F_z(\bar r-k^*)$, where the inequality follows because $\bar r-k^*>\bar x_z$;\footnote{If $F_z(\bar x_z)= F_z(\bar r-k^*)$, then $\bar x_z=\bar r-k^*$ by definition of $\bar x_z$.} $(iii)$ if  $z\in\left(Z_A(k^*)\cup Z_B(k^*)\right)^\complement$, the agent stops  with probability $1-F_z(\bar r-k^*)$. Therefore, the per-period probability  that the agent terminates his search in the unique stationary equilibrium outcome of  $\mathcal M^\xi$ is given by 

\begin{align*}
\xi\left(Z_A(k^*)\right)+\hspace*{-.5em} \int\limits_{Z_B(k^*)}\hspace*{-.65em}\big(1-F_z(\bar x_z)\big)\xi(dz)+\hspace*{-2.25em}\int\limits_{\left(Z_A(k^*)\cup Z_B(k^*)\right)^\complement}\hspace*{-2.5em}\big(1-F_z(\bar r-k^*)\big)\xi(dz)& \geq \int\limits_{Z}\big(1-F_z(\bar r-k^*)\big)\xi(dz)\\[6pt]
&=1-F(\bar r-k^*)\\[6pt]
&\geq 1-F(\bar r),
\end{align*}
where the equality follows by Bayes-plausibility, and the last inequality follows because $k^*\geq 0$. This proves the first statement of the corollary.

Additionally, if $k^*>0$ then the last inequality is strict because  $F$ has full support, implying that the expected duration of search is strictly shorter in $\mathcal M^\xi$ than $\mathcal M$. Conversely, if $k^*=0$, which is possible only if $\xi(Z_A(k^*)\cup Z_B(k^*))=0$, the above two inequalities hold with equality, implying that the expected duration of search in $\mathcal M^\xi$ and $\mathcal M$ are equal. This proves the second statement of the corollary.
\end{proof}

  \endgroup

\end{document}